\newtheorem{theorem}{Theorem}[section]
\newtheorem{lemma}{Lemma}[section]
\begin{document}

\title{Higher-dimensional minimal theory of mass-varying massive gravity and its cosmological consequences}

\author{Ahmad Khoirul Falah}
\email{akhoirulfalah94@students.itb.ac.id}
\affiliation{Theoretical Physics Laboratory, Theoretical High Energy Physics Research Division, Faculty of Mathematics and Natural Sciences, Institut Teknologi Bandung, Jl.~Ganesha no.~10 Bandung, Indonesia, 40132}

\author{Andy Octavian Latief}
\email{latief@itb.ac.id}
\affiliation{Physics of Magnetism and Photonics Research Division, Faculty of Mathematics and Natural Sciences, Institut Teknologi Bandung, Jl.~Ganesha no.~10 Bandung, Indonesia, 40132}

\author{Husin Alatas}
\email{alatas@apps.ipb.ac.id}
\affiliation{Theoretical Physics Division, Department of Physics, IPB University (Bogor Agricultural University), Jl.~Meranti, Kampus IPB Darmaga, Bogor 16680, Indonesia}

\author{Bobby Eka Gunara}
\email{bobby@fi.itb.ac.id (Corresponding author)}
\affiliation{Theoretical Physics Laboratory, Theoretical High Energy Physics Research Division, Faculty of Mathematics and Natural Sciences, Institut Teknologi Bandung, Jl.~Ganesha no.~10 Bandung, Indonesia, 40132}

\begin{abstract}
    In this paper we construct higher-dimensional minimal theory of mass-varying massive gravity (MTMVMG) where the masslike scalar potential is coupled to a vielbein potential, unlike in the previous literature where it is coupled to metric, such that the number of graviton degrees of freedom in the theory is the same as in general relativity. We then study the cosmological aspects of this theory and show that it has eight critical points: five in the massless sector and three in the massive sector. In contrast to the standard theory of mass-varying massive gravity where the graviton mass asymptotically approaches zero at late times, hence making the contribution of massive gravity to the late-time cosmic expansion minimum, the MTMVMG can provide good descriptions both in the massless and massive sectors. Especially, there are at least two interesting possible scenarios for the late-time cosmology in the theory: the dark energy is either due to the constant graviton mass which comes from the scalar field that becomes frozen after the reheating era, or due to the quintessence paradigm where the scalar field is dynamic. Therefore, if the accelerating expansion of the universe in the massless sector can be explained by standard quintessence paradigm, in the massive sector it has to be explained by the nontrivial interplay between quintessence and massive gravity.
\end{abstract}

\maketitle

\section{Introduction}

\label{sec:intro}

Several alternative theories modifying Einstein's general relativity have been proposed in the past decades as parts of an effort to solve long-standing cosmological problems such as dark energy, dark matter, and inflation. One of these theories, based on the assumption that graviton might have nonzero mass and hence later coined as the theory of \textit{massive gravity}, was originally visioned by M.~Fierz and W.~Pauli in 1939 \cite{fierz1939relativistic}. However, the theory did not have a continuous transition to general relativity in the limit of zero graviton mass, an issue known as the van Dam--Veltman--Zakharov discontinuity \cite{vandam1970massive,zakharov1970linearized}. This problem was remedied by Vainshtein's nonlinear mechanism in 1972 \cite{vainshtein1972problem}, but the nonlinear terms then gave rise to another problem called the Boulware-Deser ghost \cite{boulware1972can}. Building on several previous attempts \cite{arkanihamed2003effective,creminelli2005ghosts}, this problem was then finally resolved in 2010 by C.~de Rham, G.~Gabadadze, and A.~Tolley (dRGT) \cite{derham2010generalization,derham2011resummation}, resulting in a Lorentz-invariant, ghost-free nonlinear theory of massive gravity \cite{hassan2012resolving,hassan2012ghost,derham2012ghost,derham2011helicity,hassan2012confirmation,mirbabayi2012proof,golovnev2012on,hassan2012proof,kluson2012nonlinear}. (See also Refs.~\cite{hinterbichler2012theoretical,derham2014massive} for reviews.)

Unfortunately, the dRGT theory also had some serious challenges: there was no stable homogeneous and isotropic cosmological solutions \cite{damico2011massive,gumrukccuouglu2011open,gumrukcuoglu2012cosmological,defelice2012massive}, together with other pathologies such as Higuchi bound \cite{higuchi1987forbidden,fasiello2012cosmological} and positivity bound \cite{cheung2016positive,bellazzini2018beyond}. To overcome this, the \textit{minimal theory of massive gravity} (MTMG) was then proposed by A.~De Felice and S.~Mukohyama in 2016 \cite{defelice2016minimal} by imposing some constraints which suppress the five degrees of freedom in the original dRGT theory such that there are only two degrees of freedom, both of them are tensor modes, as in the case of general relativity, but now the theory is not Lorentz invariant. It has the same Friedmann--Lema\^{\i}tre--Robertson--Walker (FLRW) equations as in the dRGT theory, but now the FLRW background is stable \cite{defelice2016phenomenology}. There are two branches of solutions. The first is the self-accelerating branch, which is phenomenologically the same with the $\Lambda$CDM cosmology, except that the accelerating expansion of the universe is now caused by the graviton mass term, not necessarily by the cosmological constant. The second is the normal branch, which is phenomenologically different from general relativity in the scalar and tensor sectors, leading to nontrivial dynamics which could be tested against the predictions of general relativity \cite{defelice2016phenomenology,defelice2017graviton}.

Another attempt to modify the dRGT theory was done by Q.-G.~Huang, Y.-S.~Piao, and S.-Y.~Zhou in 2012 by coupling the graviton potentials to a scalar field $\psi$, enabling the graviton to have a varying mass \cite{huang2012mass}. This theory of \textit{mass-varying massive gravity} (MVMG) is again Boulware-Deser ghost-free and the Lorentz invariance is satisfied. Varying the graviton mass can lead to interesting cosmological behaviors both in the inflationary and late-time era. Specifically, the graviton mass will asymptotically approach zero at late times due to the dynamics of the theory, hence there is no need to fine tune the graviton mass to a very small number to be in line with the cosmological bounds for the graviton mass in the present time \cite{saridakis2013phantom}. However, this may be a disadvantage, since it means that the contribution of massive gravity to explain the cosmic expansion at late times is minimum \cite{tannukij2016mass}.

Another disadvantage of the MVMG theory is that it has many graviton degrees of freedom, which may lead to instabilities. Therefore, to suppress these degrees of freedom, we will follow in this paper the method of Refs.~\cite{defelice2016minimal,defelice2016phenomenology}. First, we define the precursor theory by writing the MVMG action using the vielbein formulation in the Arnowitt--Deser--Misner (ADM) formalism, but here we generalize the theory to the case of higher dimensions. We also adopt the vielbein potential in Ref.~\cite{hinterbichler2012interacting} and couple it to the masslike scalar potential $W(\psi)$, unlike in the previous literature where it is coupled to metric. The purpose of this is such that the number of graviton degrees of freedom in the theory is the same as in general relativity. We then perform the Legendre transformation to the precursor action to obtain the precursor Hamiltonian. After imposing the nontrivial constraints to the theory, we will obtain the \textit{minimal theory of mass-varying massive gravity} (MTMVMG), where the number of graviton degrees of freedom in this theory becomes ${D (D - 3)}/2$, as in the $D$-dimensional general relativity, where $D$ is the number of spacetime dimensions. For $D = 4$, the number of graviton degrees of freedom in the MTMVMG is two, in contrast to the MVMG theory where there are five graviton degrees of freedom. Therefore, in the light of \textit{minimally modified gravity} (MMG), a modified theory of gravity with two local gravitational degrees of freedom, discussed in Refs.~\cite{lin2017class,mukohyama2019minimally}, where the \textit{type-I} MMG is for the theory in which there exists an Einstein frame and the \textit{type-II} MMG is for the one in which there is no Einstein frame \cite{aoki2019phenomenology,defelice2020theory}, the MTMVMG can be viewed as an \textit{extended} type-II MMG theory.

To study the cosmological aspects of this theory, we can perform the Legendre transformation again to get the expression for the MTMVMG action, which then can be used to obtain the Friedmann-Lema\^{\i}tre equations. We take both the scalar potential and the graviton mass couplings to have exponential forms, and find that there are eight critical points in the theory: five in the massless sector and three in the massive sector. Therefore, the MTMVMG theory can have both massless and massive sectors even in the late-time era, in contrast to the ordinary MVMG theory where, as mentioned previously, the dynamics of the theory will lead the graviton mass to asymptotically approach zero at late times. This makes the MTMVMG a richer theory, which then can give us good descriptions of both the inflationary and late-time era. Especially, there are at least two interesting possible scenarios for the late-time cosmology: the dark energy is either due to the constant graviton mass which comes from the scalar field $\psi_\infty$ that becomes frozen after the reheating era, or due to the quintessence paradigm where the scalar field $\psi$ is dynamic. Therefore, if the accelerating expansion of the universe in the massless sector can be explained by standard quintessence paradigm, in the massive sector it has to be explained by the nontrivial interplay between quintessence and massive gravity.

This paper is organized as the following: In Sec.~\ref{sec:precursor} we derive the precursor action by writing the MVMG theory using the vielbein formulation. In Sec.~\ref{sec:MVMGS} we then construct the minimal theory by imposing $D$-constraints. In Sec.~\ref{sec:Friedmanneq} we derive the Friedmann-Lema\^{\i}tre equations for our model. We then perform in Sec.~\ref{sec:Dynas} the dynamical analysis around the critical points and check in Sec.~\ref{sec:locglobex} their local and global existences. In Sec.~\ref{sec:cosmologicalcons} we discuss the cosmological implication on the inflationary expansion and late-time acceleration. We then conclude the paper and write several remarks in Sec.~\ref{sec:conclusions}. Detailed calculations are presented in the Appendix.

\section{Precursor Theory}

\label{sec:precursor}

In this section we will construct the MVMG action by replacing the graviton mass with the masslike scalar potential $W(\psi)$. However, we will adopt the vielbein potential from Ref.~\cite{hinterbichler2012interacting} and couple it to $W(\psi)$, in contrast to Ref.~\cite{huang2012mass} where $W(\psi)$ is coupled to metric. We will then define the action for the precursor theory, which will be needed later to construct the MTMVMG action.

Let us first consider two $D$-dimensional Lorentzian manifolds $(\mathcal{M}, g)$ and $(\mathcal{M}_0, g_0)$ parametrized by the coordinate systems $x^\mu$ and $y^a$ with $\mu, a = 0, \ldots, D - 1$, respectively. Using the ADM formalism, the metrics $g$ and $g_0$ can be written as
\begin{eqnarray}
    ds^2_{(d)} &=& -N^2 dt^2 + \gamma_{i j} (dx^i + N^i dt) (dx^j + N^j dt), \label{eq:metfoliasi1} \\
    ds^2_{(b)} &=& -N_0^2 d\tau^2 + \gamma_{0 i j} (dy^i + N_0^i d\tau) (dy^j + N_0^j d\tau), \label{eq:metfoliasi2}
\end{eqnarray}
where $\gamma_{i j}$ and $\gamma_{0 i j}$ are the components of the induced spatial metrices, $N$ and $N_0$ are the lapse functions, $N^i$ and $N_0^i$ are the shift vectors, and $i, j = 1, \ldots, D - 1$. The subscripts $(d)$ and $(b)$ denote the dynamic and the background, respectively. Let us then introduce the vielbeins $\tensor{E}{^A_\mu}$ and $\tensor{E}{_0^A_a}$ such that the metrics \eqref{eq:metfoliasi1} and \eqref{eq:metfoliasi2} can be written as $g_{\mu \nu} = \eta_{A B} \tensor{E}{^A_\mu} \tensor{E}{^B_\nu}$ and $g_{0 a b} = \eta_{A B} \tensor{E}{_0^A_a} \tensor{E}{_0^B_b}$, where $A, B = 0, \ldots, D - 1$ and $\eta_{A B}$ is the flat Minkowski metric. Therefore, the vielbeins $\tensor{E}{^A_\mu}$ and $\tensor{E}{_0^A_a}$ have the form
\begin{equation}
    \tensor{E}{^A_\mu} = \begin{pmatrix} N & 0 \\ N^k \tensor{e}{^I_k} & \tensor{e}{^I_i} \end{pmatrix}, \qquad \tensor{E}{_0^A_a} = \begin{pmatrix} N_0 & 0 \\ N_0^k \tensor{e}{_0^I_k} & \tensor{e}{_0^I_i} \end{pmatrix}, \label{eq:basis}
\end{equation}
whose duals are given by
\begin{equation}
    \tensor{E}{^\mu_A} = \begin{pmatrix} \frac{1}{N} & 0 \\ -\frac{N^i}{N} & \tensor{e}{^i_I} \end{pmatrix}, \qquad \tensor{E}{_0^a_A} = \begin{pmatrix} \frac{1}{N_0} & 0 \\ -\frac{N_0^i}{N_0} & \tensor{e}{_0^i_I} \end{pmatrix}, \label{eq:dualbasis}
\end{equation}
such that they satisfy
\begin{eqnarray*}
    \tensor{E}{^A_\mu} \tensor{E}{^\mu_B} = \tensor{\delta}{^A_B}, &\qquad& \tensor{E}{_0^A_a} \tensor{E}{_0^a_B} = \tensor{\delta}{^A_B}, \\
    \tensor{E}{^\mu_A} \tensor{E}{^A_\nu} = \tensor{\delta}{^\mu_\nu}, &\qquad& \tensor{E}{_0^a_A} \tensor{E}{_0^A_b} = \tensor{\delta}{^a_b}.
\end{eqnarray*}
Here $\tensor{e}{^I_i}$ and $\tensor{e}{_0^I_i}$ are the spatial vielbeins, with $i, I = 1, \ldots, D - 1$, and $\tensor{e}{^i_I}$ and $\tensor{e}{_0^i_I}$ are their duals, respectively.

Now, we consider a smooth embedding $\phi: \mathcal{M} \longrightarrow \mathcal{M}_0$ such that we could have pulled back quantities
\begin{eqnarray}
    \tensor{\tilde{E}}{^A_\mu}(x) &=& \frac{\partial \phi^a}{\partial x^\mu} \tensor{E}{_0^A_a} (\phi(x)), \label{eq:transfgaugepre1} \\
    f_{\mu \nu}(x) &=& \frac{\partial \phi^a}{\partial x^\mu} \frac{\partial \phi^b}{\partial x^\nu} g_{0 a b} (\phi(x)), \label{eq:transfgaugepre2}
\end{eqnarray}
so that $\mathcal{M}$ obeys diffeomorphism rules through a St\"uckelberg field $\phi^a$. In the unitary gauge where $\phi^a = \tensor{\delta}{^a_\mu }x^\mu$, we simply recover $\tensor{\tilde{E}}{^A_\mu} = \tensor{E}{_0^A_\mu}$ and $f_{\mu \nu} = g_{0 \mu \nu}$. Note that the original formulation of the dRGT theory has a Minkowski background, $g_{0 a b} = \eta_{a b}$. Using the vielbein formulation in Ref.~\cite{hinterbichler2012interacting}, the ghost-free potential related to the graviton mass has the form
\begin{equation}
    \sum_{n = 0}^D \frac{c_n}{n! (D - n)!} \hat{\epsilon}_{A_1 A_2 \cdots A_D} \tensor{\bm{E}}{^{A_1}} \wedge \cdots \wedge \tensor{\bm{E}}{^{A_n}} \wedge \tensor{\bm{\tilde{E}}}{^{A_{n + 1}}} \wedge \cdots \wedge \tensor{\bm{\tilde{E}}}{^{A_D}} \label{eq:gfpot}
\end{equation}
with one-forms $\bm{E}^A = \tensor{E}{^A_\mu} dx^\mu$ and $\bm{\tilde{E}}^A = \tensor{\tilde{E}}{^A_\mu} dx^\mu$ the dRGT mass term for arbitrary background and dimension. The quantity $\hat{\epsilon}$ denotes the Levi-Civita symbol in the flat spacetime. Furthermore, inspired by \cite{huang2012mass}, the graviton mass is replaced by a function of a scalar field $W(\psi)$, where $\psi(x)$ is well defined on $\mathcal{M}$. Coupling $W(\psi)$ to the potential in Eq.~\eqref{eq:gfpot} and adding this coupling term to the Einstein--Klein--Gordon action, we obtain the ghost-free MVMG action as the following,
\begin{widetext}
\begin{eqnarray}
    S_\text{MVMG} &=& \int_{\mathcal{M}} d^Dx \, \det{(E)} \, \left(\frac{M_{\text{Pl}}^{D - 2}}{2} R(E) - \frac{1}{2} \partial^\mu \psi \partial_\mu \psi - V(\psi) \right) \nonumber \\
    && - \, \frac{1}{4} \int_{\mathcal{M}} {W(\psi)} \left( \sum_{n = 0}^D \frac{c_n}{n! (D - n)!} \hat{\epsilon}_{A_1 A_2 \cdots A_D} \tensor{\bm{E}}{^{A_1}} \wedge \cdots \wedge \tensor{\bm{E}}{^{A_n}} \wedge \tensor{\bm{\tilde{E}}}{^{A_{n + 1}}} \wedge \cdots \wedge \tensor{\bm{\tilde{E}}}{^{A_D}} \right), \label{eq:ev1}
\end{eqnarray}
where $V(\psi)$ is the scalar potential function.

The action for the precursor theory can be obtained by simply substituting the vielbeins in \eqref{eq:basis} and \eqref{eq:dualbasis} to the action \eqref{eq:ev1},
\begin{eqnarray}
    S_{\text{pre}} &=& \int d^Dx \, N \det{(e)} \Bigg[ \frac{M_{\text{Pl}}^{D - 2}}{2} \left({}^{(D - 1)}R(e) + K^{i j} K_{i j} - K^2 \right) + \frac{1}{2 N^2} \dot{\psi}^2 - \frac{1}{2} \partial^i \psi \partial_i \psi + \frac{N^i N^j}{2N^2} \partial_i \psi \partial_j \psi - \frac{N^i}{N^2} \dot{\psi} \partial_i \psi - V(\psi) \nonumber \\
    && - \, W(\psi) \left( | \det{(X)} | \frac{M}{N} \sum_{n = 0}^{D - 1} c_n \mathcal{S}_n(Y) + \sum_{n = 0}^{D - 1} c_{D - n} \mathcal{S}_n(X) \right) \Bigg], \label{eq:Spre}
\end{eqnarray}
\end{widetext}
where ${}^{(D - 1)}R(e)$ is the spatial Ricci scalar, while $K^{i j}$ and $K$ are the second fundamental form and the mean curvature, respectively. For any function $\psi$, we define $\dot{\psi} \equiv {\partial \psi}/{\partial t}$ and $\partial_i \psi \equiv {\partial \psi}/{\partial x^i}$. Here, the quantities $\{M, M^i, \tensor{\tilde{e}}{^I_i}\}$ are the pull back of $\{N_0, N_0^i, \tensor{e}{_0^I_i}\}$ via St\"uckelberg field $\phi$ as a background image on $\mathcal{M}$ given by Eq.~\eqref{eq:transfgaugepre1} and \eqref{eq:transfgaugepre2}, respectively. The elements $\mathcal{S}_n$ are the $n$th order symmetric polynomials which depend on either $\tensor{Y}{^I_J} \equiv \tensor{e}{^I_k} \tensor{\tilde{e}}{^k_J}$ or $\tensor{X}{^I_J} \equiv \tensor{\tilde{e}}{^I_k} \tensor{e}{^k_J}$ (see Appendix \ref{sec:appendixA} for more discussions). Note that the action \eqref{eq:Spre} violates the local Lorentz symmetry because we have used the ADM vielbeins in Eqs.~\eqref{eq:basis} and \eqref{eq:dualbasis}.

\section{Minimal Theory of MVMG}

\label{sec:MVMGS}

The discussion in this section is divided into two parts. First, we construct the Hamiltonian for the MTMVMG and identify some constraints which restricts the graviton degrees of freedom. Second, we discuss the MTMVMG action which is constructed from the precursor action discussed in the previous section but with some additional constraints.

\subsection{MTMVMG Hamiltonian and some constraints}

\label{sec:Hamil}

Let us first consider the spatial vielbeins $\tensor{e}{^I_i}$ and the scalar field $\psi$ as the canonical variables which correspond to the conjugate momenta defined as
\begin{eqnarray}
    \tensor{\pi}{^i_I} &\equiv& \frac{\delta S_{\text{pre}}}{\delta \tensor{\dot{e}}{^I_i}} \nonumber \\
    &=& \det{(e)} \, M_{\text{Pl}}^{D - 2} (K^{i j} - K \gamma^{ij}) \delta_{I J} \tensor{e}{^J_j}, \label{eq:pikon1} \\
    \pi &\equiv& \frac{\delta S_{\text{pre}}}{\delta \dot{\psi}} \nonumber \\
    &=& \det{(e)} \, \left( \frac{1}{N} \dot{\psi} - \frac{N^i}{N} \partial_i \psi \right). \label{eq:pikon2}
\end{eqnarray}
We can switch from Lagrangian to Hamiltonian by performing the Legendre transformation in order to see some constraints of the theory. As it is well-known in the vielbein language that the lapse function $N$ and the shift vector $N^i$ appear as Lagrange multipliers enforcing the diffeomorphism constraints \cite{hinterbichler2012interacting}, namely $\mathcal{R}_0 \approx 0$ and $\mathcal{R}_i \approx 0$. These constraints are called the primary constraints of the first kind, which then enable us to construct another set of constraints called the secondary constraints of the first kind \cite{henneaux1994quantization}. Note that there are only $D - 2$ independent secondary constraints, since two of them can be obtained from the others. We denote them as $\tilde{\mathcal{C}}_\tau$ $(\tau = 1, \ldots, D - 2)$, together with their Lagrange multipliers $\lambda^\tau$.

Additionally, as studied in Refs.~\cite{defelice2016minimal,defelice2016phenomenology}, from Eq.~\eqref{eq:pikon1} we also have another set of primary constraints of the second kind $\mathcal{P}^{[M N]}$ that lead to the secondary constraints of the second kind $\mathcal{Z}^{[M N]}$ in the phase space, together with their Lagrange multipliers $\alpha_{M N}$ and $\beta_{M N}$, with $M, N = 1, \ldots, (D - 1) (D - 2)/2$. These secondary constraints are necessary since the primary constraints should be preserved with respect to the time evolution. Therefore, the precursor Hamiltonian can be written down as
\begin{eqnarray}
    H_{\text{pre}} &=& \int d^{D - 1}x \, \Big( -N \mathcal{R}_0 - N^i \mathcal{R}_i \nonumber \\
    && + \, W(\psi) N \mathcal{H}_0 + W(\psi) M \mathcal{H}_1 + \tilde{\lambda}^{\tau} \tilde{\mathcal{C}}_\tau \nonumber \\
    && + \, \alpha_{M N} \mathcal{P}^{[M N]} + \beta_{M N} \mathcal{Z}^{[M N]} \Big), \label{eq:hampre}
\end{eqnarray}
where
\begin{eqnarray}
    \mathcal{R}_0 &\equiv& \det{(e)} \, \frac{M_{\text{Pl}}^{D - 2}}{2} {}^{(D - 1)}R(e) \nonumber \\
    && - \, \frac{1}{2 \det{(e)} \, M_{\text{Pl}}^{D - 2}} \left[ \tensor{\pi}{^i_I} \tensor{\pi}{^I_i} - \frac{1}{D - 2} (\tensor{\pi}{^i_I} \tensor{e}{^I_i})^2 \right], \nonumber \\
    && - \, \frac{1}{2 \det{(e)}} \pi^2 - \frac{\det{(e)}}{2} \partial_i \psi \partial^i \psi \nonumber \\
    && - \, \det{(e)} \, V(\psi), \\
    \mathcal{R}_i &\equiv& \nabla_j (\tensor{\pi}{^j_I} \tensor{e}{^I_i}) - \pi \partial_i \psi, \\
    \mathcal{H}_0 &\equiv& \det{(e)} \, \sum_{n = 0}^{D - 1} c_{D - n} \mathcal{S}_n(X), \\
    \mathcal{H}_1 &\equiv& \det{(e)} \, | \det{(X)} | \sum_{n = 0}^{D - 1} c_n \mathcal{S}_n(Y),
\end{eqnarray}
and
\begin{eqnarray}
    \mathcal{P}^{[M N]} &\equiv& (\tensor{e}{^M_j} \delta^{K N} - \tensor{e}{^N_j} \delta^{K M}) \tensor{\pi}{^j_K}, \\
    \mathcal{Z}^{[M N]} &\equiv& (\tensor{e}{^M_j} \delta^{K N} - \tensor{e}{^N_j} \delta^{K M}) \tensor{\tilde{e}}{^j_K}.
\end{eqnarray}
Since the constraints above remove some graviton degrees of freedom, we can construct a theory in which the spatial graviton degrees of freedom coincide with the standard general relativity. Inspired by Ref.~\cite{defelice2016minimal}, we can impose the $D$-constraints in unitary gauge given by
\begin{eqnarray}
    \mathcal{C}_0 &\equiv& \{\mathcal{R}_0, H_{1}\}_\text{PB} - W(\psi) \frac{\partial \mathcal{H}_0}{\partial t} \approx 0, \label{Dconstraints1} \\
    \mathcal{C}_i &\equiv& \{\mathcal{R}_i, H_{1}\}_\text{PB} \approx 0, \label{Dconstraints2}
\end{eqnarray}
where $\{ \cdots \}_\text{PB}$ denotes the Poisson bracket and
\begin{equation}
    H_1 \equiv \int d^{D - 1} x \, W(\psi) M \mathcal{H}_1.
\end{equation}
Note that these constraints consist of two new constraints and $D-2$ independent constraints $\tilde{\mathcal{C}}_\tau$ which already exist in the precursor theory. Moreover, the constraints in Eqs.~\eqref{Dconstraints1} and \eqref{Dconstraints2} imply that the theory admits the Lorentz symmetry violation. The Hamiltonian of the MTMVMG theory then reads
\begin{eqnarray}
    H_\text{MTMVMG} &=& \int d^{D - 1} x \, \Big[ -N \mathcal{R}_0 - N^i \mathcal{R}_i \nonumber \\
    && + \, W(\psi) (N \mathcal{H}_0 + M \mathcal{H}_1) + \lambda \mathcal{C}_0 + \lambda^i \mathcal{C}_i \nonumber \\
    && + \, \alpha_{M N} \mathcal{P}^{[M N]} + \beta_{M N} \mathcal{Z}^{[M N]} \Big]. \label{eq:hmtmg}
\end{eqnarray}
Thus, in total we have $D^2 - D + 2$ constraints, which means that the number of spatial graviton degrees of freedom are $D (D - 3)/2$, as in the $D$-dimensional general relativity.

\subsection{MTMVMG action}

To construct the MTMVMG action, one has to employ the Legendre transformation on the Hamiltonian functional \eqref{eq:hmtmg}. It will be shortly discussed in this subsection, but its detailed derivations will be presented in the Appendix \ref{sec:appendixB}.

As discussed in the previous subsection, we should have the nontrivial $D$-constraints \eqref{Dconstraints1} and \eqref{Dconstraints2} in the MTMVMG theory. In order to have a consistent theory, we have to modify the conjugate momenta \eqref{eq:pikon1} and \eqref{eq:pikon2} to
\begin{eqnarray}
    \frac{\tensor{\pi}{^i_I}}{\det{(e)}} &\equiv& M_{\text{Pl}}^{D - 2} (K^{i j} \delta_{I J} \tensor{e}{^J_j} - K \tensor{e}{^i_I}) \nonumber \\
    && - \, \lambda\frac{W(\psi)}{2} \frac{M}{N} \Theta^{i j} \delta_{I J} \tensor{e}{^J_j}, \label{eq:pikon3} \\
    \frac{\pi}{\det{(e)}} &\equiv& \frac{\dot{\psi}}{N} - \frac{N^i}{N} \partial_i \psi - \lambda \frac{dW}{d\psi} \frac{M}{N} \Phi, \label{pikonpi}
\end{eqnarray}
with
\begin{eqnarray}
    \Theta^{i j} &\equiv& - | \det{(X)} | \delta^{I K} (\tensor{e}{^i_K} \tensor{\tilde{e}}{^j_J} + \tensor{e}{^j_K} \tensor{\tilde{e}}{^i_J}) \nonumber \\
    && \times \, \sum_{n = 1}^{D - 1} \sum_{m = 1}^n (-1)^m c_n \tensor{\left(Y^{m - 1} \right)}{^J_I} \mathcal{S}_{n - m}(Y), \\
    \Phi &\equiv& | \det{(X)} | \sum_{n = 1}^{D - 1} c_n \mathcal{S}_n(Y).
\end{eqnarray}
The notation $\tensor{(M^m)}{^I_J}$ means
\begin{equation}
\tensor{(M^m)}{^I_J} \equiv \tensor{M}{^I_{K_1}} \tensor{M}{^{K_1}_{K_2}} \cdots \tensor{M}{^{K_{m - 1}}_J}.
\end{equation}
The modifications \eqref{eq:pikon3} and \eqref{pikonpi} imply that the MTMVMG theory modifies both the kinetic part and the mass term. As we will see later, this also provides a class of solutions which coincides in the dRGT theory in the FLRW background.

For the sake of convenience, let us first introduce the following tensors
\begin{equation}
    \tensor{\mathcal{K}}{^i_j} \equiv \tensor{\tilde{e}}{^i_I} \tensor{e}{^I_j}, \qquad \tensor{\mathcal{\bar{K}}}{^i_j} \equiv \tensor{e}{^i_I} \tensor{\tilde{e}}{^I_j}.
\end{equation}
satisfying $\tensor{\mathcal{K}}{^i_k} \tensor{\mathcal{\bar{K}}}{^k_j} = \tensor{\delta}{^i_j}$, which correspond to the spatial metrics by
\begin{equation}
    \tensor{\mathcal{K}}{^i_k} \tensor{\mathcal{K}}{^k_j} = \tilde{\gamma}^{i l} \gamma_{l j}, \qquad \tensor{\mathcal{\bar{K}}}{^i_k} \tensor{\mathcal{\bar{K}}}{^k_j} = \gamma^{i l} \tilde{\gamma}_{l j},
\end{equation}
where $\tilde{\gamma}_{i j} = \delta_{I J} \tensor{\tilde{e}}{^I_i} \tensor{\tilde{e}}{^J_j}$ is the spatial metric on $\mathcal{M}$. Performing the Legendre transformation, we obtain the MTMVMG action,
\begin{equation}
    S_{\text{MTMVMG}} = S_{\text{pre}} + S_\lambda, \label{eq:aksimtmg}
\end{equation}
where
\begin{widetext}
\begin{eqnarray}
    S_{\lambda} &=& \frac{2}{M_{\text{Pl}}^{D - 2}} \int d^D x \, N \sqrt{\gamma} \left( \lambda \frac{W(\psi)}{4} \frac{M}{N} \right)^2 \left( \gamma_{i k} \gamma_{j l} - \frac{1}{D - 2} \gamma_{i j} \gamma_{k l} \right) \Theta^{i j} \Theta^{k l} \nonumber \\
    && + \, \frac{1}{2} \int d^D x \, N \sqrt{\gamma} \left( \lambda \frac{dW}{d\psi} \frac{M}{N} \right)^2 \Phi^2 - \int d^D x \, \sqrt{\gamma} \left[ \lambda \mathcal{\bar{C}}_0 + \lambda^i \tensor{\mathcal{C}}{_i}\right],
\end{eqnarray}
and
\begin{eqnarray}
    \mathcal{\bar{C}}_0 &=& \frac{1}{2} W(\psi) M \left( \gamma_{i k} \gamma_{j l} - \frac{1}{D - 2} \gamma_{i j} \gamma_{k l} \right) \Theta^{k l} (K^{i j} - K \gamma^{i j}) \nonumber \\
    && + \, W(\psi) | \det{(\mathcal{\bar{K}})} | \sum_{n = 1}^{D - 1} \sum_{m = 1}^n (-1)^m c_n \tensor{\left( \mathcal{K}^{m - 1} \right)}{^k_l} \tensor{\tilde{\zeta}}{^l_k} \mathcal{S}_{n - m}(\mathcal{K}) \nonumber \\
    && - \, M | \det{(\mathcal{\bar{K}})} | \frac{dW}{d\psi} \left( \frac{\dot{\psi}}{N} - \frac{N^i}{N} \partial_i \psi \right) \sum_{n = 1}^{D - 1} c_n \mathcal{S}_n(\mathcal{K}), \\
    \tensor{\mathcal{C}}{_i} &=& - W(\psi) \nabla_k M | \det{(\mathcal{\bar{K}})} | \sum_{n = 1}^{D - 1} \sum_{m = 1}^n (-1)^m c_n \tensor{\left( \mathcal{K}^m \right)}{^k_i} \mathcal{S}_{n - m}(\mathcal{K}) \nonumber \\
    && - \, M | \det{(\mathcal{\bar{K}})} | \partial_i \psi \frac{dW}{d\psi} \sum_{n = 1}^{D - 1} c_n \mathcal{S}_n(\mathcal{K}),
\end{eqnarray}
with
\begin{eqnarray}
    \Theta^{i j} &=& - 2 | \det{(\mathcal{\bar{K}})} | \gamma^{i l} \sum_{n = 1}^{D - 1} \sum_{m = 1}^n (-1)^m c_n \tensor{\left(\mathcal{K}^{m}\right)}{^j_l} \mathcal{S}_{n - m}(\mathcal{K}) \label{eq:thetai} \\
    \Phi &=& | \det{(\mathcal{\bar{K}})} | \sum_{n = 1}^{D - 1} c_n \mathcal{S}_n(\mathcal{K}). \label{eq:phii}
\end{eqnarray}
\end{widetext}
It is worth mentioning that the additional term in \eqref{eq:pikon3} implies that we have to set $\alpha_{M N} = \beta_{M N} = 0$ \cite{defelice2016phenomenology,defelice2017minimal}. This is so because the tensors $\mathcal{P}^{[M N]}$ and $Y^{[M N]}$ are antisymmetric, while the tensor $\Theta^{i j}$ is symmetric.

We could extend the action \eqref{eq:aksimtmg} by adding the matter field,
\begin{equation}
    S_{\text{MTMVMG-M}} = S_{\text{pre}} + S_\lambda + S_\text{matter}. \label{eq:aksimtmg1}
\end{equation}
Here, we consider the matter field part $S_\text{matter}$ to be the perfect fluid whose energy-momentum tensor has the form
\begin{equation}
    T_{\mu \nu} = \frac{2}{\sqrt{-g}} \frac{\delta S_\text{matter}}{\delta g^{\mu \nu}} = \rho_m U_\mu U_\nu + P_m \left( g_{\mu \nu} + U_\mu U_\nu \right),
\end{equation}
where $U^\mu$ and $\rho_m$ are the unit velocity of the fluid and the energy density, respectively. The pressure $P_m$ is given by the state equation of matter fields,
\begin{equation}
    P_m = w_m \rho_m, \label{eq:state_eq}
\end{equation}
with $w_m$ is a real constant \cite{akbar2019local}. In the standard higher-dimensional cosmology, we particularly have $w_m = \frac{1}{D - 1}$ (radiation), $w_m = 0$ (dust), and $w_m = -1$ (vacuum) \cite{chatterjee1990homogeneous}.

\section{Friedmann-Lema\^{\i}tre Equations}

\label{sec:Friedmanneq}

In this section we consider a cosmological model in the MTMVMG theory. Our starting point is to write down the metric ansatz for the dynamic and the background manifolds which are spatially flat,
\begin{eqnarray*}
    ds_{(d)}^2 &=& g_{\mu \nu} dx^\mu dx^\nu = -N^2(t) dt^2 + a^2(t) \delta_{i j} dx^i dx^j, \\
    ds_{(b)}^2 &=& f_{\mu \nu} dx^\mu dx^\nu = -M^2(t) dt^2 + \tilde{a}^2(t) \delta_{i j} dx^i dx^j.
\end{eqnarray*}
In the case at hand, the action \eqref{eq:aksimtmg1} simplifies to
\begin{widetext}
\begin{eqnarray}
    S_\text{MTMVMG-M} &=& -\int d^D x \, a^{D - 1} \Bigg\{ (D - 1) M_{\text{Pl}}^{D - 2} N H^2 - \frac{1}{2 N} \dot{\psi}^2 + N V(\psi) \nonumber \\
    && - \, W(\psi) \bigg( M \sum_{n = 0}^{D - 1} c_n A_nu^{D - n - 1} + N \sum_{n = 0}^{D - 1} c_{D - n} A_n u^n \bigg) \nonumber \\
    && + \, \frac{\lambda^2 M^2}{N} \Bigg[ \frac{D - 1}{D - 2} \frac{2}{M_{\text{Pl}}^{D - 2}} \left( \frac{W(\psi)}{4} \sum_{n = 1}^{D - 1} c_n B_n u^{D - n - 1} \right)^2 \nonumber \\
    && + \, \frac{1}{2} \left(\frac{dW}{d\psi} \sum_{n = 1}^{D - 1} c_n A_n u^{D - n - 1} \right)^2 \Bigg] - \lambda \Bigg[ \frac{\dot{\psi}M}{N} \frac{dW}{d\psi} \sum_{n = 1}^{D - 1} c_n A_n u^{D - n - 1} \nonumber \\
    && + \, (D - 1) W(\psi) M \left( H - u H_f \right) \sum_{n = 1}^{D - 1} c_n B_n u^{D - n - 1} \Bigg] \Bigg\} + S_\text{matter}.
\end{eqnarray}
Here, we have introduced the quantities
\begin{equation}
    H \equiv \frac{\dot{a}}{Na}, \qquad H_f \equiv \frac{\dot{\tilde{a}}}{M\tilde{a}}, \qquad u \equiv \frac{\tilde{a}}{a}, \label{eq:parameter}
\end{equation}
and parameters
\begin{eqnarray}
    A_n &\equiv& \frac{(D - 1)!}{(D - n - 1)! n!}, \\
    B_n &\equiv& \sum_{m = 1}^n (-1)^m \frac{(D - 1)!}{(D - n + m - 1)! (n - m)!}.
\end{eqnarray}
Then, the variation with respect to the lapse function $N(t)$ gives us the first Friedmann-Lema\^{\i}tre equation,
\begin{equation}
    \frac{1}{2} (D - 1) (D - 2) H^2 = \frac{1}{M_{\text{Pl}}^{D - 2}} (\rho_m + \rho_\text{MG} + \rho_\lambda), \label{eq:Friedman1}
\end{equation}
where
\begin{eqnarray}
    \rho_\text{MG} &\equiv& \frac{1}{2 N^2} \dot{\psi}^2 + V(\psi) + W(\psi) \sum_{n = 0}^{D - 1} c_{n + 1} A_n u^{D - n - 1}, \\
    \rho_\lambda &\equiv& \frac{\lambda \dot{\psi} M}{N^2} \frac{dW}{d\psi} \sum_{n = 1}^{D - 1} c_n A_n u^{D - n - 1} - \frac{D - 1}{D - 2} \frac{\lambda^2 M^2 W^2(\psi)}{2M_{\text{Pl}}^{D - 2} N^2} \left( \sum_{n = 1}^{D - 1} c_n B_n u^{D - n - 1} \right)^2 \nonumber \\
    && + \, \frac{(D - 1) \lambda H M W(\psi)}{N} \sum_{n = 1}^{D - 1} c_n B_n u^{D - n - 1} + \frac{\lambda^2 M^2}{2 N^2} \left(\frac{dW}{d\psi} \right)^2 \left( \sum_{n = 1}^{D - 1} c_n A_n u^{D - n - 1} \right)^2. \label{eq:rhograv}
\end{eqnarray}
The variation with respect to the scale factor $a(t)$ gives us the second Friedmann-Lema\^{\i}tre equation,
\begin{equation}
    (D - 2) \frac{\dot{H}}{N} + \frac{1}{2} (D - 1) (D - 2) H^2 = - \frac{1}{M_{\text{Pl}}^{D - 2}} (P_m + P_\text{MG} + P_\lambda), \label{eq:Friedman2}
\end{equation}
where
\begin{eqnarray}
    P_\text{MG} &\equiv& \frac{1}{2 N^2} \dot{\psi}^2 - V(\psi) - \frac{W(\psi)}{N} \sum_{n = 1}^{D - 1} \left(M c_n + N c_{n + 1} \right) n A_n u^{D - n - 1}, \\
    P_\lambda &\equiv& \frac{\lambda^2 M^2 W^2(\psi)} {2 M_{\text{Pl}}^{D - 2} N^2} \left( \sum_{n = 1}^{D - 1} \frac{D - 2 n - 1}{D - 2} c_n B_n u^{D - n - 1} \right) \left( \sum_{n = 1}^{D - 1} c_n B_n u^{D - n - 1} \right) \nonumber \\
    && - \, \frac{\lambda^2 M^2}{2 N^2} \left(\frac{dW}{d\psi} \right)^2 \left( \sum_{n = 1}^{D - 1} \frac{D - 2 n - 1}{D - 1} c_n A_n u^{D - n - 1} \right) \left( \sum_{n = 1}^{D - 1} c_n A_n u^{D - n - 1} \right) \nonumber \\
    && - \, \frac{\lambda M W(\psi) H_f}{N} \sum_{n = 1}^{D - 1} \Bigg( (D - n - 1) \frac{M}{N} + (n - 1) u \Bigg) c_n B_n u^{D - n - 1} \nonumber \\
    && - \, \left( \frac{\dot{\lambda} M N + \lambda (\dot{M} N - M \dot{N})}{N^3} \right) W(\psi) \sum_{n = 1}^{D - 1} c_n B_n u^{D - n - 1} \nonumber \\
    && + \, \frac{\lambda M \dot{\psi}}{N^2} \frac{dW}{d\psi} \sum_{n = 1}^{D - 1} c_n \left(\frac{n A_n}{D-1} + B_n \right) u^{D - n - 1}. \label{eq:Pgrav}
\end{eqnarray}
From the variation with respect to $\psi(t)$, we obtain the equation of motions,
\begin{eqnarray}
    && \frac{1}{N^2} \ddot{\psi} + \left( \frac{(D - 1) H}{N} - \frac{\dot{N}}{N^3} \right) \dot{\psi} + \frac{dV}{d\psi} + \frac{1}{N} \frac{dW}{d\psi} \Bigg\{ \sum_{n = 0}^{D - 1} \left(M c_n + N c_{n + 1} \right) A_n u^{D - n - 1} \nonumber \\
    && \quad + \, \frac{\lambda M}{N} \left( N H \sum_{n = 1}^{D - 1} n c_n A_n u^{D - n - 1} + M H_f \sum_{n = 1}^{D - 1} (D - n - 1) c_n A_n u^{D - n - 1} \right) \nonumber \\
    && \quad + \, \left( \frac{\dot{\lambda} M}{N} + \frac{\lambda (\dot{M} N - M \dot{N})}{N^2} \right) \sum_{n = 1}^{D - 1} c_n A_n u^{D - n - 1} - \lambda M \Bigg[ \frac{\lambda M}{N} \frac{d^2W}{d\psi^2} \left( \sum_{n = 1}^{D - 1} c_n A_n u^{D - n - 1} \right)^2 \nonumber \\
    && \quad - \, \frac{D - 1}{D - 2} \frac{\lambda M W(\psi)}{4 M_{\text{Pl}}^{D - 2} N} \left( \sum_{n = 1}^{D - 1} c_n B_n u^{D - n - 1} \right)^2 + (D - 1) \left( H -u H_f \right) \sum_{n = 1}^{D - 1} c_n B_n u^{D - n - 1} \Bigg] \Bigg\} = 0. \label{eq:dinamikamedanskalar}
\end{eqnarray}
Performing the variation with respect to $\lambda$ will give us
\begin{eqnarray}
    && W(\psi) \left( u H_f - H \right) \left( \sum_{n = 1}^{D - 1} c_n B_n u^{D - n - 1} \right) + \frac{\lambda}{D - 2} \frac{W^2(\psi) M}{M_{\text{Pl}}^{D - 2} N} \left( \sum_{n = 1}^{D - 1} c_n B_n u^{D - n - 1} \right)^2 \nonumber \\
    && \quad - \, \frac{\dot{\psi}}{D - 1} \frac{dW}{d\psi} \left( \sum_{n = 1}^{D - 1} c_n A_n u^{D - n - 1} \right) - \frac{\lambda}{D - 1} \frac{M}{N} \left( \frac{dW}{d\psi} \right)^2 \left( \sum_{n = 1}^{D - 1} c_n A_n u^{D - n - 1} \right)^2 = 0, \label{eq:parhubblelatarbelakang}
\end{eqnarray}
\end{widetext}
which relates the Hubble rate of the background spacetime with the Hubble rate of the dynamical spacetime. In the simple case where $\psi$ is trivial and $D = 4$, Eq.~\eqref{eq:parhubblelatarbelakang} corresponds to the branches of solutions discussed in \cite{defelice2016phenomenology}. In the case of $D = 5, 6$ with trivial $\psi$, one has to solve the qubic and the quartic polynomials, respectively, while for $D \ge 7$ the solutions of the polynomials are still unknown. For nontrivial $\psi$, it is still unknown whether such branches exist. These aspects will be considered elsewhere.

\section{Dynamical System Analysis}

\label{sec:Dynas}

Let us consider a special case where the couplings $W(\psi)$ and $V(\psi)$ have the form
\begin{eqnarray}
    W(\psi) &=& W_0 \exp \left( -\frac{\lambda_W \psi}{\sqrt{M_{\text{Pl}}^{D - 2}}} \right), \label{eq:VWpsi1} \\
    V(\psi) &=& V_0 \exp \left( -\frac{\lambda_V \psi}{\sqrt{M_{\text{Pl}}^{D - 2}}} \right), \label{eq:VWpsi2}
\end{eqnarray}
which has been considered in four dimensional cases \cite{leon2013cosmological, wu2013dynamical} where the constants $V_0, W_0 > 0$ and $\lambda_V, \lambda_W \geq 0$. Moreover, the form of the scalar potential $V(\psi)$ may provide inflationary expansion of the early universe model and has been well studied in the context of dynamical systems in Ref.~\cite{copeland1998exponential}. Note that for $\lambda_W = 0$ we have the MTMVMG theory with ordinary constant graviton mass \cite{defelice2016minimal,defelice2016phenomenology}.

For the rest of the paper we simply take a branch of solutions of Eq.~\eqref{eq:parhubblelatarbelakang} where $\lambda = 0$. Setting the lapse functions $N(t) = M(t) = 1$, we introduce the autonomous variables,
\begin{eqnarray}
    x_\rho &\equiv& \left( \frac{2 \rho_m}{M_{\text{Pl}}^{D - 2} \lambda_D^2 H^2} \right)^{1/2}, \\
    x_\psi &\equiv& \frac{\dot{\psi}}{\sqrt{M_{\text{Pl}}^{D - 2}} \lambda_D H}, \\
    x_V &\equiv& \left( \frac{2 V(\psi)}{M_{\text{Pl}}^{D - 2} \lambda_D^2 H^2} \right)^{1/2}, \\
    x_W &\equiv& \left( \frac{2 W(\psi)}{M_{\text{Pl}}^{D - 2} \lambda_D^2 H^2} \right)^{1/2},
\end{eqnarray}
with $\lambda^2_D \equiv (D - 1) (D - 2)$ such that the equations of motion \eqref{eq:Friedman2}, \eqref{eq:dinamikamedanskalar}, and \eqref{eq:parhubblelatarbelakang} can be written down in terms of the autonomous variables,
\begin{eqnarray}
    \frac{2}{D - 1} x_\psi' &=& (1 - w_m) x_\psi^3 - (1 + w_m) x_\psi x_V^2 \nonumber \\
    && - \, [f_2(u) + w_m f_1(u)] x_\psi x_W^2 - (1 - w_m) x_\psi \nonumber \\
    && + \, 2 \sqrt{\frac{D - 2}{D - 1}} [\lambda_V x_V^2 + \lambda_W f_1(u) x_W^2], \label{eq:xpsiprime} \\
    \frac{2}{D - 1} x_V' &=& (1 + w_m) x_V + (1 - w_m) x_\psi^2 x_V \nonumber \\
    && - \, (1 + w_m) x_V^3 - [f_2(u) + w_m f_1(u)] x_V x_W^2 \nonumber \\
    && - \, \sqrt{\frac{D - 2}{D - 1}} \lambda_V x_\psi x_V, \label{eq:xvprime} \\
    \frac{2}{D - 1} x_W' &=& (1 + w_m) x_W + (1 - w_m) x_\psi^2 x_W \nonumber \\
    && - \, (1 + w_m) x_V^2 x_W - [f_2(u) + w_m f_1(u)] x_W^3 \nonumber \\
    && - \, \sqrt{\frac{D - 2}{D - 1}} \lambda_W x_\psi x_W, \label{eq:xwprime}
\end{eqnarray}
where we have used the constraint coming from Eq.~\eqref{eq:Friedman1},
\begin{equation}
    x_\rho^2 + x_\psi^2 + x_V^2 + f_1(u) x_W^2 = 1. \label{eq:friedmanconstr}
\end{equation}
Note that here the prime symbol denotes the derivative with respect to $\ln{(a)}$. We also have defined
\begin{eqnarray}
    f_1(u) &\equiv& \sum_{n = 0}^{D - 1} c_{n + 1} A_n u^{D - n - 1}, \label{eq:f1f2-1} \\
    f_2(u) &\equiv& \sum_{n = 1}^{D - 1} \left( c_n + c_{n + 1} \right) n A_n u^{D - n - 1}, \label{eq:f1f2-2}
\end{eqnarray}
which are assumed to be bounded functions. As we have seen above, the scale factor $a(t)$ can be thought of as a parameter in this picture and the fiducial scale $\tilde{a}(t)$ is only a background in this setup. Therefore, we conclude that the quantity $u$ in Eq.~\eqref{eq:parameter} is not a dynamical variable; it might be either a function of time, $u(t)$, or a constant. The first case is called the \textit{normal branch}, while the latter is called the \textit{self-accelerating branch}. Both have been appeared in the context of four-dimensional MTMG \cite{defelice2016phenomenology}. Similar situation also occurs in the cosmological model of dRGT massive gravity (see, for example, Ref.~\cite{alatas2017parameter}).

We could also introduce some higher dimensional quantities which are analogous to the four-dimensional cases. First, the state equation parameter and the density parameter in this theory are given by
\begin{eqnarray}
    w_\text{MG} &\equiv& \frac{P_\text{MG}}{\rho_\text{MG}} = \frac{x_\psi^2 - x_V^2 - f_2(u) x_W^2}{x_\psi^2 + x_V^2 + f_1(u) x_W^2}, \\
    \Omega_\text{MG} &\equiv& \frac{2 \rho_\text{MG}}{M_{\text{Pl}}^{D - 2} \lambda_D^2 H^2} = x_\psi^2 + x_V^2 + f_1(u) x_W^2, \label{eq:mgsector}
\end{eqnarray}
respectively. Then, the decelerated parameter has the form
\begin{eqnarray}
    q &=& -1 - \frac{\dot{H}}{H^2} \nonumber \\
    &=& -1 + \frac{(D - 1)}{2} \Big[ 1 + w_m + (1 - w_m) x_\psi^2 \nonumber \\
    && - \, (1 + w_m) x_V^2 - (f_2(u) + w_m f_1(u)) x_W^2 \Big],
\end{eqnarray}
where for $q < 0$ we have an accelerated universe model. There are five critical points in the massless sector and three critical points in the massive sector, which are listed in the Table \ref{tab:massless} and \ref{tab:massive}, respectively, including their properties. Note that the quantities $\mathcal{A}_\pm(D, w_m, \lambda_V)$ and $\mathcal{B}_\pm(D, w_m, \lambda_W)$ mentioned in these tables are defined as
\begin{eqnarray}
    && \mathcal{A}_\pm(D, w_m, \lambda_V) \equiv \frac{1 + w_m}{3 - w_m} + \frac{(D - 2) \lambda_V^2}{(D - 2) (3 - w_m)} \nonumber \\
    && \qquad \qquad \times \, \Bigg\{ 1 \pm \Bigg[ 1 - \frac{2 (D - 1) (2 - w_m) (1 + w_m)}{(D - 2) \lambda_V^2} \nonumber \\
    && \qquad \qquad + \, \left( \frac{(D - 1)(1 + w_m)}{(D - 2) \lambda_V^2} \right)^2 \Bigg]^{1/2} \Bigg\} \label{eq:Apm} \\
    && \mathcal{B}_\pm(D, w_m, \lambda_W) \equiv \frac{1 + w_m}{3 - w_m} + \frac{(D - 2) \lambda_W^2}{(D - 1) (3 - w_m)} \nonumber \\
    && \qquad \qquad \times \, \Bigg\{ 1 \pm \Bigg[ 1 - \frac{2 (D - 1) (2 - w_m) (1 + w_m)}{(D - 2) \lambda_W^2} \nonumber \\
    && \qquad \qquad + \, \left( \frac{(D - 1) (1 + w_m)}{(D - 2) \lambda_W^2} \right)^2 \Bigg]^{1/2} \Bigg\}. \label{eq:Bpm}
\end{eqnarray}

Let us first discuss the massless sector in which there are five critical points, namely, CP$_1$,  CP$_2$, CP$_3$, CP$_4$ and CP$_5$ with trivial coupling $W_0 = 0$. The point CP$_1$ describes a matter-dominated era in which the stability behavior depends on the state parameter $w_m$. We find the late-time attractor identified by a stable node and the past-time attractor identified by an unstable node. There are three possible stabilities, namely stable node, unstable node, and saddle point. The universe is either acceleratedly expanded for $w_m < -{(D - 3)}/{(D - 1)}$ or unacceleratedly expanded for $w_m \geq -{(D - 3)}/{(D - 1)}$. The first case is however unphysical since it contains vacuum with $w_m = -1$, while in the latter case the matter of the universe could be dominated by dust ($w_m = 0$) or radiation ($w_m = \frac{1}{D - 1}$).

\begin{widetext}

\begin{table}[htbp]

\scriptsize

\begin{tabularx}{\columnwidth}{| >{\centering\arraybackslash}>{\hsize=0.5\hsize}X | >{\centering\arraybackslash}>{\hsize=0.75\hsize}X | >{\centering\arraybackslash}>{\hsize=1.55\hsize}X | >{\centering\arraybackslash}>{\hsize=0.3\hsize}X | >{\centering\arraybackslash}>{\hsize=1.0\hsize}X | >{\centering\arraybackslash}>{\hsize=0.65\hsize}X | >{\centering\arraybackslash}>{\hsize=1.1\hsize}X | >{\centering\arraybackslash}>{\hsize=1.0\hsize}X | >{\centering\arraybackslash}>{\hsize=2.15\hsize}X |}

\hline

Critical Points & $x_{\psi, c}$ & $x_{V, c}$ & $x_{W, c}$ & Existence & $w_{\text{MG}}$ & $\Omega_{\text{MG}}$ & $q$ & Stability \\

\hline
\hline

CP$_1$ & $0$ & $0$ & $0$ & always & undefined & $0$ & $\frac{D - 3 + w_m(D - 1)}{2}$ & stable node for $w_m < -1$, unstable node for $w_m > 1$, saddle point otherwise \\

\hline

CP$_2$ & $1$ & $0$ & $0$ & always & $1$ & $1$ & $D - 2$ & stable node for $w_m > 1$, $\lambda_V > 2 \sqrt{\frac{D - 1}{D - 2}}$, and $\lambda_W > 2 \sqrt{\frac{D - 1}{D - 2}}$, unstable node for $w_m < 1$, $\lambda_V < 2 \sqrt{\frac{D - 1}{D - 2}}$, and $\lambda_W < 2 \sqrt{\frac{D - 1}{D - 2}}$, saddle point otherwise \\

\hline

CP$_3$ & $-1$ & $0$ & $0$ & always & $1$ & $1$ & $D - 2$ & unstable node for $w_m < 1$, saddle point otherwise \\

\hline

CP$_4$ & $x_{\psi, c}$ & $0$ & $0$ & $w_m = 1$ and $0 < | x_{\psi, c} | < 1$ & $1$ & $x_{\psi, c}^2$ & $D - 2$ & unstable for $-1 < x_{\psi, c} < 0$, or $0 < x_{\psi, c} < 1$ with at least either $\lambda_V < \frac{2}{x_{\psi, c}} \sqrt{\frac{D - 1}{D - 2}}$ or $\lambda_W < \frac{2}{x_{\psi, c}} \sqrt{\frac{D - 1}{D - 2}}$, nonhyperbolic for $0 < x_{\psi, c} < 1$ \\

\hline

CP$_5$ & $\frac{\mathcal{A}_\pm}{\lambda_V} \sqrt{\frac{D - 1}{D - 2}}$ & $\frac{1}{\lambda_V} \sqrt{\frac{(D - 1) (2 - \mathcal{A}_\pm) \mathcal{A}_\pm}{2 (D - 2)}}$ & $0$ & $0 < \mathcal{A}_\pm < 2$ & $\frac{3 \mathcal{A}_\pm - 2}{\mathcal{A}_\pm + 2}$ & $\frac{(D - 1) (2 + \mathcal{A}_\pm) \mathcal{A}_\pm}{2 (D - 2) \lambda_V^2}$ & $\frac{(D - 1) \mathcal{A}_\pm}{2} - 1$ & see Fig.~\ref{fig:CP5} \\

\hline

\end{tabularx}

\caption{\label{tab:massless} The properties, the existence, the equation-of-state parameter $w_\text{MG}$, the density parameter $\Omega_\text{MG}$, the deceleration parameter $q$, and the stability conditions of the critical points of the autonomous system in the massless sector, $W(\psi) = 0$. Note that we have introduced the notation $\mathcal{A}_\pm$ in Eq.~\eqref{eq:Apm}.}

\end{table}

\begin{table}[htbp]

\scriptsize

\begin{tabularx}{\columnwidth}{| >{\centering\arraybackslash}>{\hsize=0.5\hsize}X | >{\centering\arraybackslash}>{\hsize=0.75\hsize}X | >{\centering\arraybackslash}>{\hsize=1.1\hsize}X | >{\centering\arraybackslash}>{\hsize=1.55\hsize}X | >{\centering\arraybackslash}>{\hsize=1.25\hsize}X | >{\centering\arraybackslash}>{\hsize=0.45\hsize}X | >{\centering\arraybackslash}>{\hsize=1.1\hsize}X | >{\centering\arraybackslash}>{\hsize=0.85\hsize}X | >{\centering\arraybackslash}>{\hsize=1.45\hsize}X |}

\hline

Critical Points & $x_{\psi, c}$ & $x_{V, c}$ & $x_{W, c}$ & Existence & $w_{\text{MG}}$ & $\Omega_{\text{MG}}$ & $q$ & Stability \\

\hline
\hline

CP$_6$ & $0$ & $\sqrt{\frac{\lambda_W}{\lambda_W - \lambda_V}}$ & $\sqrt{\frac{\lambda_V}{| f_1(u) | (\lambda_W - \lambda_V)}}$ & $\lambda_W > \lambda_V$ and $f_1(u) = f_2(u) < 0$ & $-1$ & $1$ & $-1$ & stable node for $w_m > -1$ and $\lambda_V \lambda_W < \frac{1}{4} \frac{D - 2}{D - 1}$, stable spiral for $w_m > -1$ and $\lambda_V \lambda_W > \frac{1}{4} \frac{D - 2}{D - 1}$, saddle point otherwise \\

\hline

CP$_7$ & $0$ & $\sqrt{1 - f_1(u) x_{W, c}^2}$ & $x_{W, c}$ & $0 \leq x_{W, c} \leq 1$, $\lambda_V = \lambda_W = 0$, and $f_1(u) = f_2(u) > 0$ & $-1$ & $1$ & $-1$ & unstable for $w_m < -1$, nonhyperbolic for $w_m > -1$ \\

\hline

CP$_8$ & $\frac{\mathcal{B}_\pm}{\lambda_W} \sqrt{\frac{D - 1}{D - 2}}$ & $0$ & $\frac{1}{\lambda_W} \sqrt{\frac{(D - 1) (2 - \mathcal{B}_\pm) \mathcal{B}_\pm}{2 (D - 2) f_1(u)}}$ & $0 < \mathcal{B}_\pm < 2$ & $\frac{3 \mathcal{B}_\pm - 2}{\mathcal{B}_\pm + 2}$ & $\frac{(D - 1) (2 + \mathcal{B}_\pm) \mathcal{B}_\pm}{2 (D - 2) \lambda_W^2}$ & $\frac{(D - 1) \mathcal{B}_\pm}{2} - 1$ & see Fig.~\ref{fig:CP8} \\

\hline

\end{tabularx}

\caption{\label{tab:massive} The properties, the existence, the equation-of-state parameter $w_\text{MG}$, the density parameter $\Omega_\text{MG}$, the deceleration parameter $q$, and the stability conditions of the critical points of the autonomous system in massive sector, $W(\psi)\neq 0$. Note that we have introduced the notation $\mathcal{B}_\pm$ in Eq.~\eqref{eq:Bpm}.}

\end{table}

\end{widetext}

At CP$_2$ we also find a late-time and past-time attractors according to the parameter values with also three possible stabilities, namely stable node, unstable node, and saddle point. At this point, gravitons can be thought of as nonphantom energy dominated by nonaccelerating expansion process of the universe ($\Omega_\text{MG} = 1$). The point CP$_3$ has similar properties as CP$_2$ in terms of its existence and the values of $w_m$, $\Omega_\text{MG}$, and $q$. At this point we do not have a late-time attractor (a stable node).

The point CP$_4$ exists on the interval $0 < | x_{\psi, c} | < 1$ which are unstable on $-1 < x_{\psi, c} < 0$, or on $0 < x_{\psi, c} < 1$ with at least either $\lambda_W < \frac{2}{x_{\psi, c}} \sqrt{\frac{D - 1}{D - 2}}$ or $\lambda_V < \frac{2}{x_{\psi, c}} \sqrt{\frac{D - 1}{D - 2}}$. It becomes nonhyperbolic on $0 < x_{\psi, c} < 1$ with $\lambda_V > \frac{2}{x_{\psi, c}} \sqrt{\frac{D - 1}{D - 2}}$ and $\lambda_W > \frac{2}{x_{\psi, c}} \sqrt{\frac{D - 1}{D - 2}}$. At this point, we have a class of universe with nonaccelerating expansion since $D \geq 4$. These universe are filled by the matter and the scalar field with constraint $x_{\rho, c}^2 + x_{\psi, c}^2 = 1$ and the equation of state parameters $w_m = w_\text{MG} = 1$.

\begin{figure}[t]
    \centering
    \includegraphics[width = 0.9 \columnwidth]{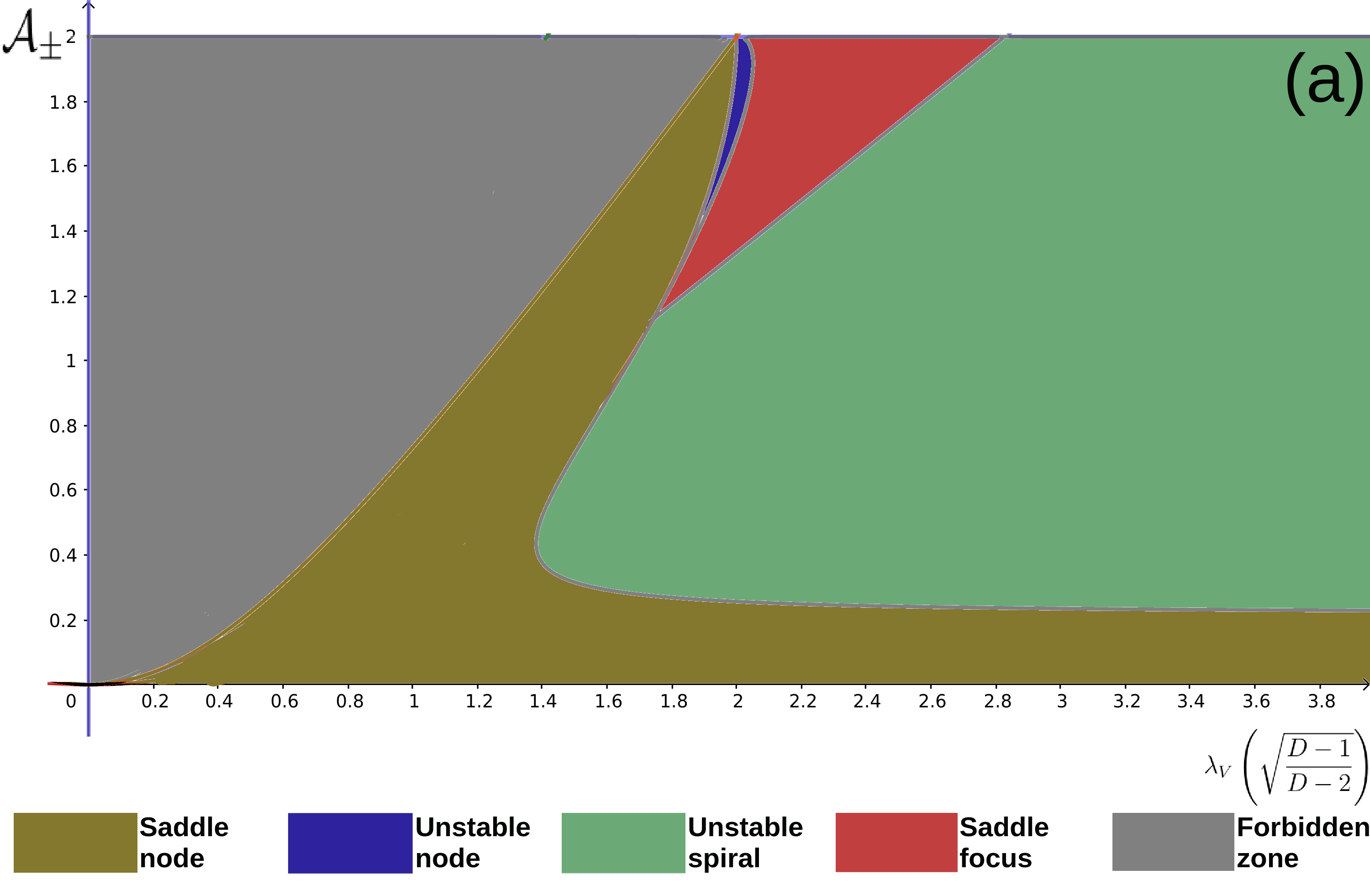}
    \includegraphics[width = 0.9 \columnwidth]{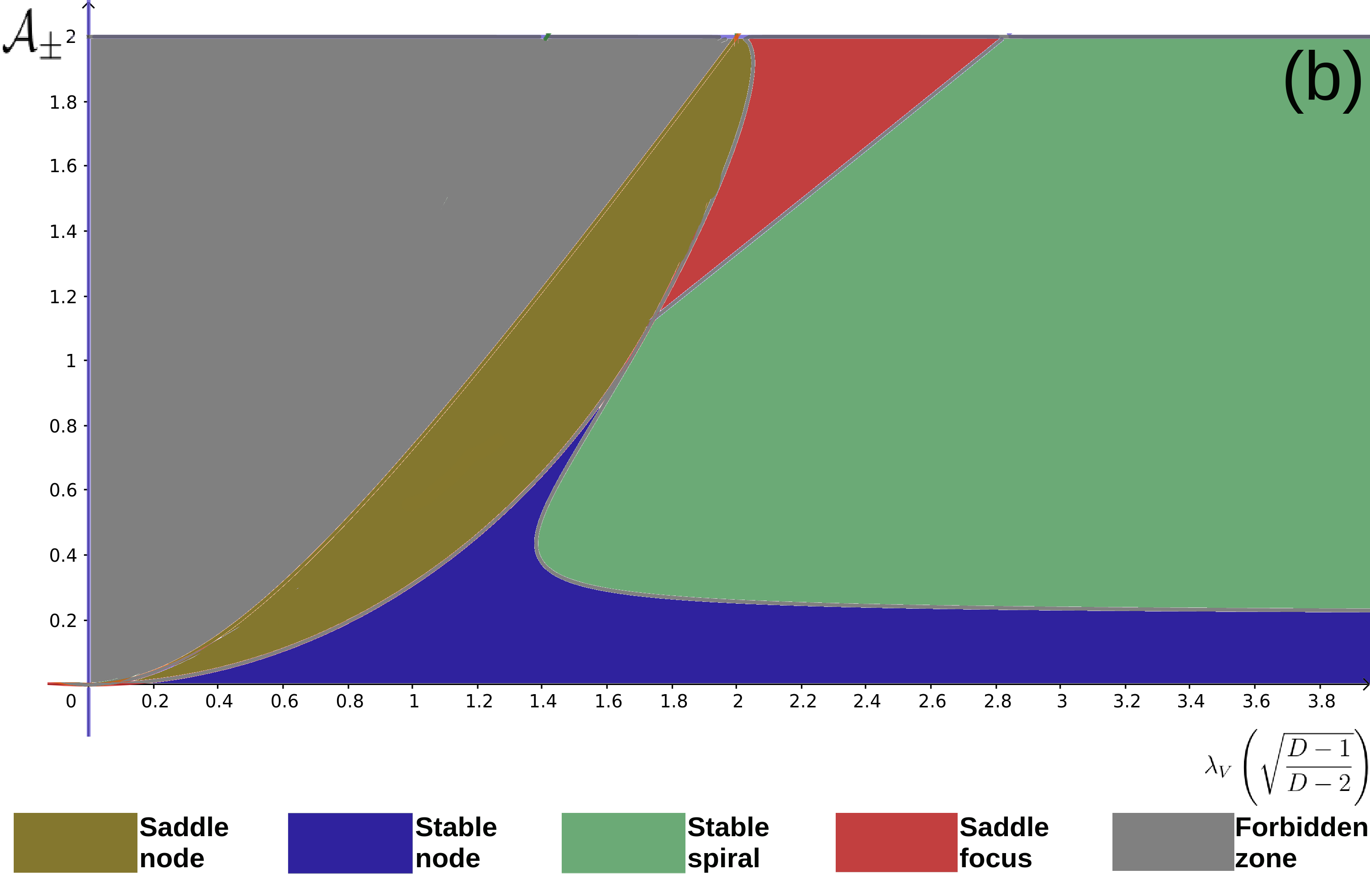}
    \caption{The stability conditions of the critical point CP$_5$ obtained by plotting $\mathcal{A}_\pm$ defined in Eq.~\eqref{eq:Apm} as a function of $\lambda_V$. The forbidden zone is the area with unphysical properties $\Omega_\text{MG} > 1$. This figure consist of two possibilities of the parameter values $\lambda_V$ and $\lambda_W$, where (a) for $\lambda_V > \lambda_W$, and (b) for $\lambda_V < \lambda_W$.}
    \label{fig:CP5}
\end{figure}

The point CP$_5$ exists on the interval $0 < \mathcal{A}_\pm < 2$ which could be either the late-time attractor, the past-time attractor, or saddle points, depending on the parameter $\lambda_V$ dan $\lambda_W$. For $\mathcal{A}_\pm(D, \lambda_V, \lambda_W) < \frac{2}{D - 1}$, we have a class of universe which acceleratedly expands and the scalar field plays a role as quintessencelike. The density parameter are on the interval $0 \leq \Omega_\text{MG} \leq 1$ such that $\lambda_V < \sqrt{\frac{(D - 1)(2 + \mathcal{A}_\pm) \mathcal{A}_\pm}{2 (D - 2)}}$ is not allowed since it will produce $\Omega_\text{MG} > 1$. Around $\lambda_V \approx \sqrt{\frac{(D - 1)(2 + \mathcal{A}_\pm) \mathcal{A}_\pm}{2(D - 2)}}$, the expansion of the universe is dominated by the scalar field ($\Omega_\text{MG} \approx 1$) which will be a good candidate for saddle power-law inflation model. This will be discussed in detail in Sec.~\ref{sec:cosmologicalcons}.

Next, we consider the massive sector which consists of the points CP$_6$, CP$_7$, and CP$_8$. The points CP$_6$ and CP$_7$ have vanishing kinetic part of the scalar field, which implies that the scalar becomes very massive. On the other hand, the kinetic part of the scalar is nonzero at CP$_8$.

The point CP$_6$ could be either the late-time attractor for $w_m > -1$ or saddle point for $w_m < -1$ where the graviton mass plays a role as the cosmological constant which dominates the accelerating expansion of universe. This cosmological constant has to be positive since the parameter $\lambda_W > \lambda_V$. These features give us a good candidate for the compatible description of the well-known observation result \cite{peebles2003cosmological}.

\begin{figure}[t]
    \centering
    \includegraphics[width = 0.9 \columnwidth]{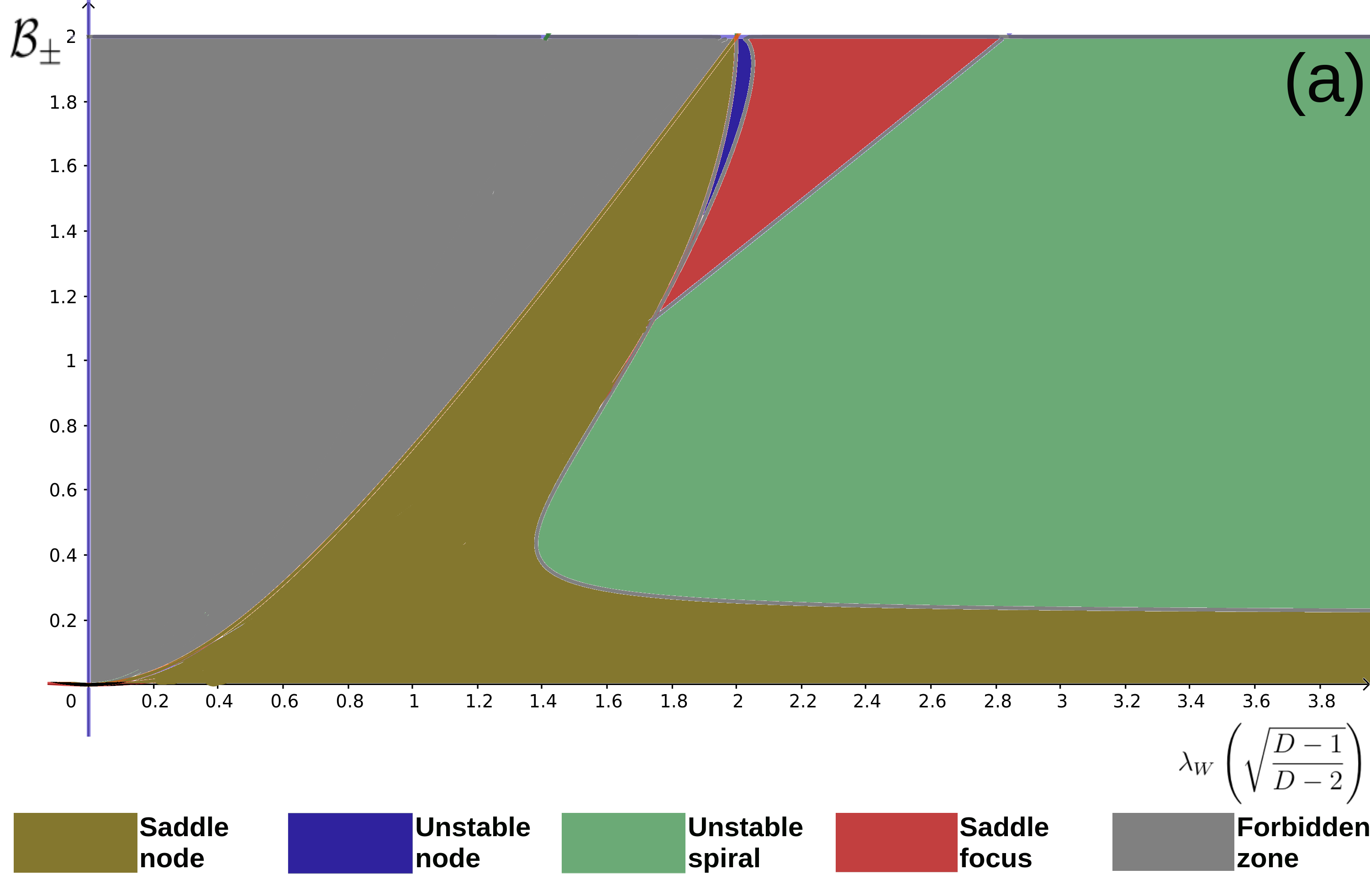}
    \includegraphics[width = 0.9 \columnwidth]{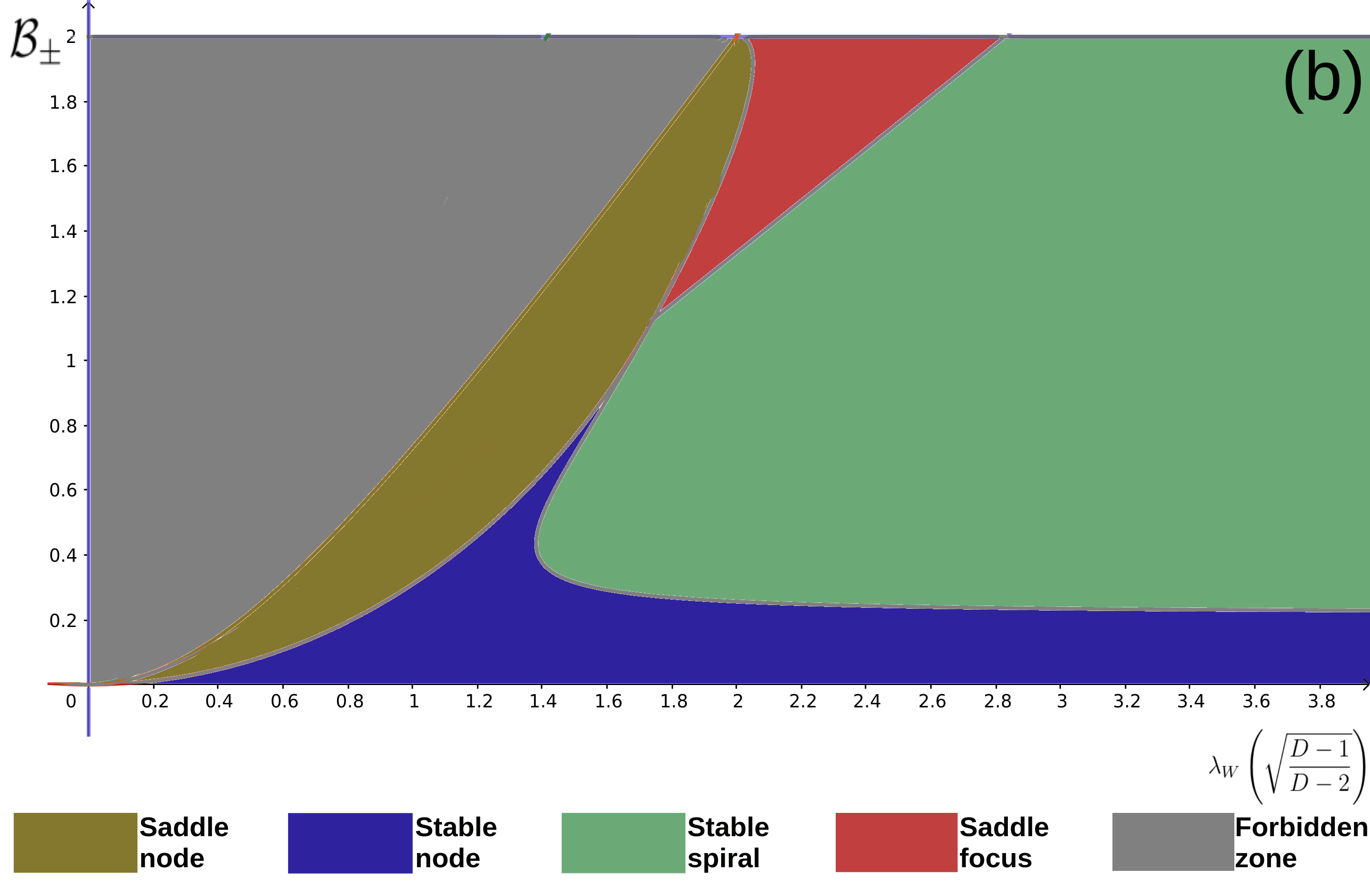}
    \caption{The stability conditions of the critical point CP$_8$ obtained by plotting $\mathcal{B}_\pm$ defined in Eq.~\eqref{eq:Bpm} as a function of $\lambda_W$. The forbidden zone is the area with unphysical properties $\Omega_\text{MG} > 1$. This figure consists of two possibilities of the parameter values $\lambda_V$ and $\lambda_W$, where (a) for $\lambda_V < \lambda_W$, and (b) for $\lambda_V>\lambda_W$.}
    \label{fig:CP8}
\end{figure}

The point CP$_7$ is defined on the circle $x_{V, c}^2 + f_1(u) x_{W, c}^2 = 1$ which may be either unstable for $w_m < -1$ or nonhyperbolic for $w_m > -1$. At this point, the parameter $\lambda_V = \lambda_W = 0$ showing that both the scalar potential and the graviton mass are constant. They behave like the cosmological constant which dominates the accelerating expansion of universe.

Finally, the point CP$_8$ which could be either the late-time attractor or the saddle point depending on the values of $\lambda_V$ and $\lambda_W$. For $\mathcal{B}_\pm(D, \lambda_V, \lambda_W) < \frac{2}{D - 1}$, the universe acceleratedly expands and the mass-varying massive graviton plays a role as quentessence dark energy. There exists a forbidden zone in CP$_8$; it has $\Omega_\text{MG} > 1$ for $\lambda_W < \sqrt{\frac{(D - 1)(2 + \mathcal{B}_\pm) \mathcal{B}_\pm}{2 (D - 2)}}$.

\section{Local-Global Existence of Solutions}

\label{sec:locglobex}

In this section we will prove the local-global existence and the uniqueness of the evolution equations \eqref{eq:xpsiprime}, \eqref{eq:xvprime}, and \eqref{eq:xwprime} with constraint  \eqref{eq:friedmanconstr} using Picard's iteration and the contraction mapping properties. We will first discuss the $f_1(u) > 0$ case, then continue with the $f_1(u) < 0$ case.

First of all, we  introduce the dynamical variables
\begin{equation}
    \bm{u} = \begin{pmatrix} x_\psi \\ x_V \\ x_W \end{pmatrix}, \label{dynvar}
\end{equation}
defined on an interval $I \equiv [s, s+ \epsilon]$ where $s \equiv \ln{a} \in \mathbb{R}$ and $\epsilon$ is a small positive constant. The functions $f_1(u)$ and $f_2(u)$ in Eqs.~\eqref{eq:f1f2-1} and \eqref{eq:f1f2-2}are bounded. In the first part of this section, we consider the case of $f_1(u) > 0$ such that from the constraint \eqref{eq:friedmanconstr} we could have
\begin{equation}
    \left.
    \begin{array}{cc}
        0 \leq | x_\psi | \leq 1 & \\
        0 \leq | x_V | \leq 1 & \\
        0 \leq | f_1(u) |^{1/2} | x_W | \leq 1 &
    \end{array}
    \right\rbrace \label{eq:syaratvardin}
\end{equation}
In other words, all of the quantities $(x_\psi, x_V, f_1(u) x_W)$ are defined on an open set $U \subset S^3$ where $S^3$ is the 3-sphere. It is important to notice that the critical point CP$_6$ is excluded in this setup.

All of the evolution equations \eqref{eq:xpsiprime}, \eqref{eq:xvprime}, and \eqref{eq:xwprime} can be simply rewritten into
\begin{equation}
    \frac{d\bm{u}}{ds} = \mathcal{J}(\bm{u}), \label{fungsiJ}
\end{equation}
with
\begin{widetext}
\begin{eqnarray}\label{JY}
    \mathcal{J}(\bm{u}) \equiv \frac{1}{2} (D-2) \left( \begin{array}{c}
        (1 - w_m) x_\psi^3 - (1 + w_m) x_\psi x_V^2 - (f_2(u) + w_m f_1(u)) x_\psi x_W^2 \\
        -(1 - w_m) x_\psi + 2 \sqrt{\frac{D - 2}{D - 1}} \left( \lambda_V x_V^2 + \lambda_W f_1(u) x_W^2 \right) \\ \\
        (1 + w_m) x_V + (1 - w_m) x_\psi^2 x_V - (1 + w_m) x_V^3 \\
        -(f_2(u) + w_m f_1(u)) x_V x_W^2 - \sqrt{\frac{D - 2}{D - 1}} \lambda_V x_\psi x_V \\ \\
        (1 + w_m) x_W + (1 - w_m) x_\psi^2 x_W - (1 + w_m) x_V^2 x_W \\
        -(f_2(u) + w_m f_1(u)) x_W^3 - \sqrt{\frac{D - 2}{D - 1}} \lambda_W x_\psi x_W
    \end{array} \right).
\end{eqnarray}
\end{widetext}

\begin{lemma} \label{opJY}
    The operator $ \mathcal{J}(\bm{u})$ in Eq.~\eqref{fungsiJ} is locally Lipschitz with respect to $\bm{u}$.
\end{lemma}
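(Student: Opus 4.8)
The plan is to show that $\mathcal{J}$ is a polynomial map in the components of $\bm{u}$ with coefficients that are bounded (since $f_1(u), f_2(u)$ are assumed bounded and $w_m, \lambda_V, \lambda_W, D$ are constants), and then invoke the standard fact that a $C^1$ map — in particular a polynomial one — is locally Lipschitz on any compact set. Concretely, I would first fix a point $\bm{u}_0$ in the admissible region $U$ described by \eqref{eq:syaratvardin} and choose a closed ball $\bar{B}_r(\bm{u}_0)$ of some radius $r > 0$; because $U$ is (contained in) a bounded set, I may take $\bar{B}_r(\bm{u}_0)$ to be bounded, hence compact.

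Next I would estimate $\|\mathcal{J}(\bm{u}) - \mathcal{J}(\bm{v})\|$ for $\bm{u}, \bm{v} \in \bar{B}_r(\bm{u}_0)$. The cleanest route is the mean value inequality: since each component of $\mathcal{J}$ is a polynomial of degree at most three in $(x_\psi, x_V, x_W)$, the Jacobian matrix $D\mathcal{J}(\bm{u})$ has entries that are polynomials of degree at most two, hence are continuous and therefore bounded on the compact ball by some constant $L = L(r, \bm{u}_0, D, w_m, \lambda_V, \lambda_W, \sup|f_1|, \sup|f_2|)$. Then for any $\bm{u}, \bm{v}$ in the ball, writing $\bm{u}(t) = \bm{v} + t(\bm{u} - \bm{v})$ and using convexity of the ball,
\begin{equation}
    \|\mathcal{J}(\bm{u}) - \mathcal{J}(\bm{v})\| = \left\| \int_0^1 D\mathcal{J}(\bm{u}(t)) (\bm{u} - \bm{v}) \, dt \right\| \leq L \|\bm{u} - \bm{v}\|,
\end{equation}
which is exactly the local Lipschitz estimate. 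Alternatively, one can avoid differentiation entirely and expand $\mathcal{J}(\bm{u}) - \mathcal{J}(\bm{v})$ termwise, factoring differences of monomials such as $x_\psi^3 - y_\psi^3 = (x_\psi - y_\psi)(x_\psi^2 + x_\psi y_\psi + y_\psi^2)$ and bounding the quadratic factors by $3r'^2$ where $r'$ bounds the ball; I would probably present this elementary version since it makes the boundedness of $f_1, f_2$ do all the work and needs no smoothness hypothesis beyond what is stated.

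I do not anticipate a genuine obstacle here: the only points requiring care are (i) making explicit that the admissible region is bounded so that closed balls inside a neighborhood are compact — this follows from \eqref{eq:syaratvardin}, with the caveat already flagged in the text that CP$_6$ (where $f_1(u) < 0$ could push $|x_W|$ large) is excluded from this part of the analysis — and (ii) noting that $u$, while possibly time-dependent, enters $\mathcal{J}$ only through the \emph{bounded} functions $f_1(u), f_2(u)$, so it contributes nothing worse than a bounded coefficient and does not spoil the polynomial-in-$\bm{u}$ structure. The main "work," such as it is, is simply bookkeeping the constant $L$; the conceptual content is just that polynomials are locally Lipschitz.
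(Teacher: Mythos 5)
Your proposal is correct and, in the elementary termwise version you say you would actually present (factoring differences of monomials and bounding the remaining factors using the a priori bounds from \eqref{eq:syaratvardin} and the boundedness of $f_1$ and $f_2$), it coincides with the paper's own proof, which writes out exactly this termwise difference estimate and concludes $\left| \mathcal{J}(\bm{u}) - \mathcal{J}(\hat{\bm{u}}) \right| \le C_{\mathcal{J}}(|\bm{u}|,|\hat{\bm{u}}|)\,|\bm{u}-\hat{\bm{u}}|$. The mean-value-inequality variant you mention first is an equally valid alternative but is not the route the paper takes.
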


\begin{proof}
We have the following estimate
\begin{eqnarray}
    && | \mathcal{J} |_U \leq \frac{1}{2} (D-2) \Bigg[ | 1 - w_m | | x_\psi |^3 + | 1 + w_m | | x_\psi | | x_V |^2 \nonumber \\
    && \qquad + \, | f_2(u) + w_m f_1(u) | | x_\psi | | x_W |^2 + | 1 - w_m | | x_\psi | \nonumber \\
    && \qquad + \, 2 \sqrt{\frac{D - 2}{D - 1}} \left( \lambda_V | x_V |^2 + \lambda_W | f_1(u) | | x_W |^2 \right) \nonumber \\
    && \qquad + \, | 1 + w_m | | x_V | + | 1 - w_m | | x_\psi |^2 | x_V | + | 1 + w_m | | x_V |^3 \nonumber \\
    && \qquad + \, | f_2(u) + w_m f_1(u) | | x_V | | x_W |^2 + \sqrt{\frac{D - 2}{D - 1}} \lambda_V | x_\psi | | x_V | \nonumber \\
    && \qquad + \, | 1 + w_m | | x_W | + | 1 - w_m | | x_\psi |^2 | x_W | \nonumber \\
    && \qquad + \, | 1 + w_m | | x_V |^2 | x_W | + | f_2(u) + w_m f_1(u) | | x_W |^3 \nonumber \\
    && \qquad + \, \sqrt{\frac{D - 2}{D - 1}} \lambda_W | x_\psi | | x_W | \Bigg]. \label{estJY}
\end{eqnarray}
Then, using Eq.~\eqref{eq:syaratvardin}, we can show that $| \mathcal{J}(\bm{u}, x) |_U$ is indeed bounded on $U$.

Moreover, for $\bm{u}, \hat{\bm{u}} \in U$ we have
\begin{eqnarray}\label{estJYLps}
    && | \mathcal{J}(\bm{u}) - \mathcal{J}(\hat{\bm{u}}) |_U \leq \frac{1}{2} (D-2) \Bigg[ | 1 - w_m | | x_\psi^3 - \hat{x}_\psi^3 | \nonumber \\
    && \qquad + \, | 1 + w_m | | x_\psi x^2_V - \hat{x}_\psi \hat{x}^2_V | + | 1 - w_m | | x_\psi - \hat{x}_\psi | \nonumber \\
    && \qquad + \, | f_2(u) + w_m f_1(u) | | x_\psi x^2_W - \hat{x}_\psi \hat{x}^2_W | \nonumber \\
    && \qquad + \, 2 \sqrt{\frac{D - 2}{D - 1}} \left( \lambda_V | x^2_V - \hat{x}^2_V | + \lambda_W | f_1(u) | | x^2_W - \hat{x}^2_W | \right) \nonumber \\
    && \qquad + \, | 1 + w_m | | x_V - \hat{x}_V | + | 1 - w_m | | x^2_\psi x_V - \hat{x}^2_\psi \hat{x}_V | \nonumber \\
    && \qquad + \, | 1 + w_m | | x^3_V - \hat{x}^3_V | \nonumber \\
    && \qquad + \, | f_2(u) + w_m f_1(u) | | x_V x^2_W - \hat{x}_V \hat{x}^2_W | \nonumber \\
    && \qquad + \, \sqrt{\frac{D - 2}{D - 1}} \lambda_V | x_\psi x_V - \hat{x}_\psi \hat{x}_V | \nonumber \\
    && \qquad + \, | 1 + w_m | | x_W - \hat{x}_W | + | 1 - w_m | | x_\psi^2 x_W - \hat{x}_\psi^2 \hat{x}_W | \nonumber \\
    && \qquad + \, | 1 + w_m | | x_V^2 x_W - \hat{x}_V^2 \hat{x}_W | \nonumber \\
    && \qquad + \, | f_2(u) + w_m f_1(u) | | x^3_W - \hat{x}^3_W | \nonumber \\
    && \qquad + \, \sqrt{\frac{D - 2}{D - 1}} \lambda_W | x_\psi x_W - \hat{x}_\psi \hat{x}_W | \Bigg].
\end{eqnarray}
After some computations, we obtain
\begin{equation}
    \left| \mathcal{J}(\bm{u}) - \mathcal{J}(\hat{\bm{u}}) \right|_U \le C_{\mathcal{J}}(| \bm{u} |, | \hat{\bm{u}} |) | \bm{u} - \hat{\bm{u}} |, \label{localLipshitzcon}
\end{equation}
showing that $\mathcal{J}$ is locally Lipshitz with respect to $\bm{u}$.
\end{proof}

Next, we rewrite Eq.~\eqref{fungsiJ} into the integral form
\begin{equation}
    \bm{u}(s) = \bm{u}(s_0) + \int_{s_0}^s \, \mathcal{J} \left( \bm{u}(\hat{s}) \right) d\hat{s}. \label{IntegralEquation}
\end{equation}
We define a Banach space
\begin{equation}
    X \equiv \{ \bm{u} \in C(I, \mathbb{R}^2) : \, \bm{u}(x_0) = \bm{u}_{0}, \, \sup_{x \in I}{| \bm{u}(x) |} \leq L_0 \},
\end{equation}
endowed with the norm
\begin{equation}
    | \bm{u} |_{X} = \sup_{x \in I}{|\bm{u}(x)|},
\end{equation}
where $L_0 > 0$. Introducing an operator $\mathcal{K}$
\begin{equation}
    \mathcal{K}(\bm{u}(x)) = \bm{u}_{0} + \int_{x_0}^x \mathcal{J} \left( \bm{u}(s), s \right) ds, \label{OpKdefinition}
\end{equation}
and using Lemma \ref{opJY}, we have the following result \cite{akbar2015existence}:
\begin{lemma} \label{uniqueness}
    Let $\mathcal{K}$ be an operator defined in Eq.~\eqref{OpKdefinition}. Suppose there exists a constant $\varepsilon > 0$ such that $\mathcal{K}$ is a mapping from $X$ to itself and $\mathcal{K}$ is a contraction mapping on $I = [x, x + \varepsilon]$ with
    \begin{equation}
        \varepsilon \leq \min \left( \frac{1}{C_{L_0}}, \frac{1}{C_{L_0} L_0 + \| \mathcal{J}(x) \|} \right).
    \end{equation}
    Then, the operator $\mathcal{K}$ is a contraction mapping on $X$.
\end{lemma}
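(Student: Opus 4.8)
The plan is to read Lemma~\ref{uniqueness} as the statement that, once $\varepsilon$ obeys the displayed bound, the Picard operator $\mathcal{K}$ of Eq.~\eqref{OpKdefinition} is a self-map of the complete metric space $X$ and is a contraction on it; the contraction mapping principle then delivers a unique fixed point of $\mathcal{K}$, i.e.\ a unique solution of the integral equation \eqref{IntegralEquation} on $I$, which is precisely the local solution of \eqref{eq:xpsiprime}--\eqref{eq:xwprime} under the constraint \eqref{eq:friedmanconstr}. The two ingredients needed are the completeness of $X$ (a closed, sup-norm-bounded subset of the Banach space of continuous $\mathbb{R}^3$-valued functions on $I$, hence complete) and the local Lipschitz property of Lemma~\ref{opJY}: restricted to the ball $\{|\bm{u}|\le L_0\}$ the estimates \eqref{estJY}--\eqref{estJYLps} yield a single Lipschitz constant $C_{L_0}\equiv C_{\mathcal{J}}(L_0,L_0)$ and, in particular, the uniform bound $|\mathcal{J}(\bm{u})|\le \|\mathcal{J}(x)\|+C_{L_0}L_0$ for every $\bm{u}\in X$, where $\|\mathcal{J}(x)\|$ abbreviates the value of $\mathcal{J}$ at the initial datum $\bm{u}_0$.

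First I would check that $\mathcal{K}$ maps $X$ into itself. For $\bm{u}\in X$ the function $\mathcal{K}(\bm{u})$ is continuous and satisfies $\mathcal{K}(\bm{u})(x_0)=\bm{u}_0$; moreover, for every $x\in I$,
\[
\bigl|\mathcal{K}(\bm{u})(x)-\bm{u}_0\bigr| \;\le\; \int_{x_0}^{x} \bigl|\mathcal{J}(\bm{u}(\hat{s}))\bigr|\,d\hat{s} \;\le\; \varepsilon\,\bigl(\|\mathcal{J}(x)\|+C_{L_0}L_0\bigr) \;\le\; 1,
\]
where the last step is exactly the second bound on $\varepsilon$. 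With $L_0$ normalised so that this keeps $\mathcal{K}(\bm{u})$ inside the prescribed sup-norm ball (which forces something like $L_0\ge 1+|\bm{u}_0|$), we conclude $\mathcal{K}(\bm{u})\in X$. Then I would establish the contraction estimate: for $\bm{u},\hat{\bm{u}}\in X$ Lemma~\ref{opJY} gives, for every $x\in I$,
\[
\bigl|\mathcal{K}(\bm{u})(x)-\mathcal{K}(\hat{\bm{u}})(x)\bigr| \;\le\; \int_{x_0}^{x} \bigl|\mathcal{J}(\bm{u}(\hat{s}))-\mathcal{J}(\hat{\bm{u}}(\hat{s}))\bigr|\,d\hat{s} \;\le\; C_{L_0}\,\varepsilon\;|\bm{u}-\hat{\bm{u}}|_X,
\]
and the supremum over $x\in I$ yields $|\mathcal{K}(\bm{u})-\mathcal{K}(\hat{\bm{u}})|_X\le C_{L_0}\varepsilon\,|\bm{u}-\hat{\bm{u}}|_X$. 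The first bound $\varepsilon\le 1/C_{L_0}$ makes $C_{L_0}\varepsilon\le 1$; taking it strictly (or shrinking $\varepsilon$ slightly) produces a genuine contraction.

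Having $\mathcal{K}$ a contraction on the complete space $X$, the Banach fixed-point theorem gives the unique fixed point, i.e.\ the unique local solution; repeating the argument on adjacent intervals $[x_k,x_k+\varepsilon]$, and using that the constraint \eqref{eq:friedmanconstr} confines the orbit to the compact region \eqref{eq:syaratvardin} so that $L_0$ and $C_{L_0}$ can be kept fixed step by step, then promotes the local solution to a global one. The $f_1(u)<0$ case follows verbatim with $|f_1(u)|$ replacing $f_1(u)$, the only change being that CP$_6$ is excised from the admissible region.

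The step I expect to be the main obstacle is the bookkeeping in the self-map property: one must ensure that the Lipschitz constant $C_{L_0}$ extracted from the nonlinear estimates \eqref{estJY}--\eqref{estJYLps} is genuinely uniform over the whole ball $X$, and that the stated bound on $\varepsilon$ really does trap every Picard iterate inside that ball --- as written the bound seems to presuppose an implicit normalisation of $L_0$, which should be made explicit. By contrast, the contraction inequality is an immediate corollary of Lemma~\ref{opJY} and the first bound on $\varepsilon$, and the passage from local to global existence is the standard continuation argument once the a~priori bound \eqref{eq:syaratvardin} is available.
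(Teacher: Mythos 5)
Your argument is correct and is precisely the standard Picard--Banach fixed-point proof that the paper itself does not write out but delegates to the cited reference \cite{akbar2015existence}, with the constants $C_{L_0}$, $L_0$, and $\|\mathcal{J}(x)\|$ in the $\varepsilon$-bound playing exactly the roles you assign them (self-map property from the second term of the minimum, contraction from the first). The two caveats you flag --- that the self-map step presupposes a normalisation like $L_0 \ge 1 + |\bm{u}_0|$, and that $\varepsilon \le 1/C_{L_0}$ only gives a non-expansive map unless the inequality is taken strictly --- are genuine minor imprecisions in the lemma as stated rather than defects in your proof.
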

\noindent The above lemma shows that there exists a unique fixed point of Eq.~\eqref{OpKdefinition} ensuring a unique local solution of the differential equation \eqref{fungsiJ}. We can further construct a maximal solution by repeating the above arguments of the local existence with the initial condition $\bm{u}(x - x_n)$ for some $x_0 < x_n < x$ and using the uniqueness condition to glue the solutions.

We can now show the existence of global solutions of Eq.~\eqref{fungsiJ}. Let us consider the integral form \eqref{IntegralEquation} such that
\begin{equation}
    | \bm{u}(s) | \le | \bm{u}(s_0) | + \int_{s_0}^s | \mathcal{J} \left( \bm{u}(\hat{s}) \right) | d\hat{s}. \label{IntegralEquation1}
\end{equation}
We first consider the self-accelerating branch where the parameter $u$ in Eq.~\eqref{eq:parameter} is constant. Using Eqs.~\eqref{eq:syaratvardin} and \eqref{estJY}, we get
\begin{eqnarray}
    && | \bm{u}(t) | \leq | \bm{u}(t_0) | + \frac{1}{2} (D - 2) \Bigg[ 3 | 1 - w_m | + 3 | 1 + w_m | \nonumber \\
    && \qquad + \, | f_2(u) + w_m | + \sqrt{\frac{D - 2}{D - 1}} \left( 3 \lambda_V + 2 \lambda_W \right) \nonumber \\
    && \qquad + \, 2 \left| \frac{f_2(u)}{f_1(u)} + w_m \right| + 2 \frac{| 1 + w_m |}{| f_1(u) |^{1/2}} + \frac{| 1 - w_m |}{| f_1(u) |^{1/2}} \nonumber \\
    && \qquad + \, \frac{| f_2(u) + w_m f_1(u) |}{| f_1(u) |^{3/2}} \nonumber \\
    && \qquad + \, \sqrt{\frac{D - 2}{D - 1}} \frac{\lambda_W}{| f_1(u) |^{1/2}} \Bigg] \ln{\left( \frac{a(t)}{a(t_0)} \right)}. \label{eq:solIntegralEquation}
\end{eqnarray}

The second part is to consider the $f_1(u) < 0$ case. From the constraint \eqref{eq:friedmanconstr} we could have
\begin{equation}
    \left.
    \begin{array}{rcl}
        x_\psi & = & \cos{\alpha} \\
        x_V & = & \sin{\alpha} \cosh{\beta} \\
        |f_1(u)|^{1/2} x_W & = & \sin{\alpha} \sinh{\beta}
    \end{array}
    \right\rbrace \label{eq:syaratvardin1}
\end{equation}
where $\alpha \equiv \alpha(s)$ and $\beta \equiv \beta(s)$. In this case, Lemma \ref{opJY} and Lemma \ref{uniqueness} still hold, but we need to modify the estimate to show the global existence. In the case at hand, using Eq.~\eqref{eq:syaratvardin1}, the estimate \eqref{IntegralEquation1} becomes
\begin{eqnarray}
    && | \bm{u}(t) | \leq | \bm{u}(t_0) | + \frac{1}{2} (D - 2) \Bigg\{ 2 | 1 - w_m | \ln{\left( \frac{a(t)}{a (t_0)} \right)} \nonumber \\
    && \qquad + \, \Bigg[ | 1 + w_m | + | 1 - w_m | + \frac{| 1 + w_m |}{| f_1(u) |^{1/2}} + \frac{| 1 - w_m |}{| f_1(u) |^{1/2}} \nonumber \\
    && \qquad + \, \sqrt{\frac{D - 2}{D - 1}} \left( \lambda_V + \frac{\lambda_W}{| f_1(u) |^{1/2}} \right) \Bigg] \int_{s_0}^s \cosh{\beta} \, d\hat{s} \nonumber \\
    && \qquad + \, \Bigg[ | 1 + w_m | + \left| \frac{f_2(u)}{f_1(u)} + w_m \right| \nonumber \\
    && \qquad + \, 2 \sqrt{\frac{D - 2}{D - 1}} \left( \lambda_V + \lambda_W \right) \Bigg] \int_{s_0}^s \cosh^2{\beta} \, d\hat{s} \nonumber \\
    && \qquad + \, \Bigg[ 2 | 1 + w_m | + \left| \frac{f_2(u)}{f_1(u)} + w_m \right| \nonumber \\
    && \qquad + \, \frac{| f_2(u) + w_m f_1(u) |}{| f_1(u) |^{3/2}} \Bigg] \int_{s_0}^s \cosh^3{\beta} \, d\hat{s} \Bigg\}. \label{eq:solIntegralEquation1}
\end{eqnarray}
For the normal branch with $u(t)$, we employ similar procedure as above and use the assumptions that $f_1(u)$ and $f_2(u)$ are bounded. Then, we obtain the slightly modified forms of Eqs.~\eqref{eq:solIntegralEquation} and \eqref{eq:solIntegralEquation1}.

Thus, we have proven
\begin{theorem} \label{thmlocglob}
    There exists a global solution of the evolution equations \eqref{eq:xpsiprime}, \eqref{eq:xvprime}, and \eqref{eq:xwprime} with constraint  \eqref{eq:friedmanconstr}.
\end{theorem}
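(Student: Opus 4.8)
The plan is to upgrade the local existence already furnished by Lemmas~\ref{opJY} and~\ref{uniqueness} to a global statement by means of an \emph{a priori} bound that rules out finite-time blow-up. First I would fix an initial time $s_0$ and an initial datum $\bm{u}_0$ satisfying the Friedmann constraint~\eqref{eq:friedmanconstr}; since this is the Hamiltonian (Friedmann) constraint of the theory, it is preserved along solutions of~\eqref{fungsiJ}, so any solution issuing from $\bm{u}_0$ stays on the constraint set. By Lemma~\ref{opJY} the field $\mathcal{J}$ in~\eqref{JY} is locally Lipschitz, and by Lemma~\ref{uniqueness} the integral operator $\mathcal{K}$ of~\eqref{OpKdefinition} is a contraction on the Banach space $X$ over an interval whose length $\varepsilon$ is controlled by $C_{L_0}$ and $L_0$; Banach's fixed-point theorem then yields a unique local solution, and iterating this construction with initial data taken at times approaching the ends of the interval of existence, gluing via uniqueness, produces a unique maximal solution on some $(s_-, s_+)$.

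It remains to prove $(s_-, s_+) = \mathbb{R}$. Suppose $s_+ < \infty$; by the standard continuation criterion it suffices to exhibit a bound on $|\bm{u}(s)|$ uniform on $[s_0, s_+)$, since such a bound makes $L_0$ and $C_{L_0}$ in Lemma~\ref{uniqueness} uniform, hence gives a uniform lower bound on the continuation step $\varepsilon$, contradicting maximality. Here the sign of $f_1(u)$ matters. If $f_1(u) > 0$, the constraint~\eqref{eq:friedmanconstr} confines $(x_\psi, x_V, |f_1(u)|^{1/2} x_W)$ to the unit $3$-sphere, so $|\bm{u}|$ is already bounded by a constant depending only on $\sup_u |f_1(u)|^{-1/2}$; feeding~\eqref{eq:syaratvardin} into the estimate~\eqref{estJY} then gives the linear-in-$\ln a$ bound~\eqref{eq:solIntegralEquation} for the self-accelerating branch (and its stated variant for the normal branch, using that $f_1, f_2$ are bounded), so $|\bm{u}(s)|$ stays finite as $s \uparrow s_+$. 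If $f_1(u) < 0$, the constraint surface is a noncompact hyperboloid and I would use the parametrization~\eqref{eq:syaratvardin1}: from~\eqref{eq:xpsiprime}--\eqref{eq:xwprime} one derives a differential inequality of the form $|\beta'| \le c_1 + c_2 \cosh\beta$ for the hyperbolic angle $\beta(s)$, integrates it to conclude that $\beta$, and hence all the integrals $\int \cosh^k\beta\, d\hat{s}$ entering~\eqref{eq:solIntegralEquation1}, remain finite on the bounded interval $[s_0, s_+)$, which again bounds $|\bm{u}(s)|$ and contradicts $s_+ < \infty$. The argument for $s_- > -\infty$ is symmetric, and collecting both branches and both sign cases establishes the theorem.

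The step I expect to be the main obstacle is the $f_1(u) < 0$ case. Because the phase space is then the noncompact hyperboloid of~\eqref{eq:syaratvardin1}, finite-time escape to infinity is a priori possible, and one must genuinely close a Gr\"onwall-type estimate on $\beta$: the delicate point is to check that the cubic nonlinearities in~\eqref{eq:xwprime}, once rewritten in the variables $(\alpha, \beta)$, do not generate a growth rate for $\cosh\beta$ worse than the linear one above --- equivalently, that the potentially dangerous term $-(f_2(u) + w_m f_1(u)) x_W^3$ is compensated by the analogous terms in the $x_V$ and $x_\psi$ equations through the constraint, so that $\beta'$ does not acquire a $\cosh^2\beta$ or $\cosh^3\beta$ contribution. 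Once this structural cancellation is verified, the uniform a priori bound, and with it the global existence asserted in Theorem~\ref{thmlocglob}, follows.
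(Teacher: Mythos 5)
Your overall architecture coincides with the paper's: Lemmas~\ref{opJY} and~\ref{uniqueness} give a unique local solution via Picard iteration on the integral equation \eqref{IntegralEquation}, the maximal solution is built by gluing, and globality is reduced to an a priori bound on $|\bm{u}(s)|$, treated separately for $f_1(u)>0$ (where the constraint \eqref{eq:friedmanconstr} confines the variables as in \eqref{eq:syaratvardin} and the bound \eqref{eq:solIntegralEquation}, linear in $\ln a$, follows at once) and for $f_1(u)<0$ (where one passes to the hyperbolic parametrization \eqref{eq:syaratvardin1}). Up to and including the $f_1(u)>0$ case, and for the reduction of the normal branch to boundedness of $f_1$ and $f_2$, your argument is the paper's argument.

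The genuine problem is the step you yourself flag as the main obstacle, and your proposed resolution of it does not work. A differential inequality of the form $|\beta'|\le c_1+c_2\cosh\beta$ cannot be ``integrated to conclude that $\beta$ remains finite on the bounded interval $[s_0,s_+)$'': since $\cosh\beta\ge\tfrac{1}{2}e^{\beta}$, the comparison equation $y'=\tfrac{1}{2}c_2e^{y}$ blows up at time $2e^{-y_0}/c_2$, so such an inequality is perfectly consistent with $\beta\to\infty$ strictly before $s_+$. A Gr\"onwall argument requires the right-hand side to be at most linear in $\beta$ itself (or to satisfy an Osgood condition), not linear in $\cosh\beta$; the structural cancellation you hope to verify would therefore have to do far more than eliminate the $\cosh^2\beta$ and $\cosh^3\beta$ contributions. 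It is only fair to note that the paper does not close this loop either: its final estimate \eqref{eq:solIntegralEquation1} bounds $|\bm{u}(t)|$ by the integrals $\int_{s_0}^{s}\cosh^{k}\beta\,d\hat{s}$ for $k=1,2,3$, whose finiteness is never established, and the theorem is then asserted. Your diagnosis of where the difficulty lies is thus more candid than the paper's presentation, but neither your Gr\"onwall step nor the paper's bare estimate actually proves global existence when $f_1(u)<0$; a complete proof would need a conserved or monotone quantity controlling $\beta$ on the noncompact constraint surface (for instance a direct Lyapunov-type estimate on $x_V^2-|f_1(u)|x_W^2$), or an argument excluding the $f_1(u)<0$ regime on physical grounds.
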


\section{Cosmological Models}

\label{sec:cosmologicalcons}

In this section we will discuss some possible cosmological models of the theory. To simplify the computation, we particularly choose the self-accelerating branch where the parameter $u$ in Eq.~\eqref{eq:parameter} is constant. In the case of the exponential form of the potentials \eqref{eq:VWpsi1} and \eqref{eq:VWpsi2}, we may have an inflation era in which it can be described by the well-known power law inflation \cite{lucchin1985power} where $a(t) \varpropto t^{1/\epsilon}$, with the slow-roll parameter $\epsilon = {| \dot{H} |}/{H^2} < 1$.

If the scalar $\psi$ plays a role as the inflaton field in the early epoch, then the critical points CP$_5$ and CP$_8$ are the good candidates to describe that era, with the slow-roll parameter given by
\begin{eqnarray}
    \epsilon &=& \frac{(D - 2) \lambda_V^2}{4} < 1, \label{eq:epsiloncp5} \\
    \epsilon &=& \frac{(D - 2) \lambda_W^2}{4} \left( 1 + \frac{\sum_{n = 0}^{D - 1} c_n A_n u^{D - n - 1}}{\sum_{n = 0}^{D - 1} c_{n + 1} A_n u^{D - n - 1}} \right) < 1, \nonumber \\ \label{eq:epsiloncp8}
\end{eqnarray}
for CP$_5$ and CP$_8$, respectively. In the CP$_5$ case, which is in the massless sector, we can use Eqs.~\eqref{eq:mgsector} and \eqref{eq:epsiloncp5} to get
\begin{equation}
    w_\text{MG} = \frac{D - 2}{D - 1} \frac{\lambda_V^2}{2} - 1,
\end{equation}
which is negative for $D > 3$. Hence, from Table \ref{tab:massless}, we have
\begin{equation}
    \mathcal{A}_\pm = \frac{2 (D - 2) \lambda_V^2}{8 (D - 1) - (D - 2) \lambda_V^2} < \frac{1}{D - 1}.
\end{equation}
Similarly, in the CP$_8$ case, which is in the massive sector, assuming that $u$ is constant, we can use Eqs.~\eqref{eq:mgsector} and \eqref{eq:epsiloncp8} to get
\begin{equation}
    w_\text{MG} = \frac{D - 2}{D - 1} \frac{\lambda_W^2}{2} \left( 1 + \frac{\sum_{n = 0}^{D - 1} c_n A_n u^{D - n - 1}}{\sum_{n = 0}^{D - 1} c_{n + 1} A_n u^{D - n - 1}} \right) - 1,
\end{equation}
which is negative for $D > 3$. Hence, from Table \ref{tab:massive}, we obtain
\begin{widetext}
\begin{equation}
    \mathcal{B}_\pm = \frac{2 (D - 2) \lambda_W^2 \sum_{n = 0}^{D - 1} (c_n + c_{n + 1}) A_n u^{D - n - 1}}{8 (D - 1) \sum_{n = 0}^{D - 1} c_{n + 1} A_n u^{D - n - 1} - (D - 2) \lambda_W^2 \sum_{n = 0}^{D - 1} (c_n + c_{n + 1}) A_n u^{D - n - 1}} < \frac{1}{D - 1}.
\end{equation}
\end{widetext}

The scalar $\psi$ in the inflation era has the form
\begin{equation}
    \psi(t) \propto \frac{\sqrt{M_{\text{Pl}}^{D - 2}}}{\lambda_\alpha} \ln{\left[\frac{\lambda_\alpha^2 V_0 \epsilon t^2}{2 M_{\text{Pl}}^{D - 2} (D - 1 - \epsilon)} \right]}, \label{eq:psiinfl}
\end{equation}
where $\lambda_\alpha = \lambda_V$ ($\lambda_W$) for the case of CP$_5$ (CP$_8$), respectively.

In order to have a nucleosynthesis era, we have to assume $\epsilon \ll 1$ such that the reheating phase begins at $t_i$ and ends at $t_f$. Therefore, we have the scale factor
\begin{equation}
    \frac{a(t_f)}{a(t_i)} = \left( \frac{t_f}{t_i} \right)^{1/\epsilon} = \exp{\left[ \frac{\lambda_\alpha (\psi_f - \psi_i)}{2 \epsilon \sqrt{M_{\text{Pl}}^{D - 2}}} \right]}. \label{eq:afaiinfl}
\end{equation}
In addition, the scalar potential in $\psi_i = \psi(t_i)$ drops to the scalar potential in $\psi_f = \psi(t_f)$ implying that the argument of the exponential is positive and the scale factor increases during this period. At the end of the inflation (or in the beginning of the reheating at $t = t_i$), the scalar field $\psi$ still dominates the universe with initial density parameter $\Omega_\text{MG, i} \simeq 1$. Then, the phase transition of $\psi$ occurs such that $\psi$ decays to matter fields in the time interval $t_i < t < t_f$. The decay process of the scalar $\psi$ becoming a matter field occurs relativistically with $P_{m^\ast} = {\rho_{m^\ast}}/{(D - 1)}$. From Eq.~\eqref{eq:afaiinfl} and the conservation of energy of relativistic matter (radiation), we obtain the relation $\rho_{\text{MG}, f}/{\rho_{\text{MG}, i}} = ({a_i}/{a_f})^{2 \epsilon}$ and $\rho_{M, f}/\rho_{M, i} = ({a_i}/{a_f})^D$, respectively. Since the inflaton density parameter $\Omega_\text{MG} = 2 \rho_\text{MG}/{(M_{\text{Pl}}^{D - 2} \lambda_D^2 H^2)}$ and the matter field density, we find that $\Omega_\text{MG}$ decreases according to
\begin{equation}
    \Omega_\text{MG} \simeq \left(\frac{1 \ \text{MeV}^4}{\rho_M} \right)^{2 \epsilon/D}, \qquad (t_i \le t \le t_f),
\end{equation}
as $\rho_M$ increases. Note that here we have assumed that $\epsilon = {|\dot{H}|}/{H^2} \ll 1$ in the interval. At the end, we have $\rho_M \sim 10^4-10^8 \text{ MeV}^4$ at $t = t_f$.

On the other hand, we may also apply the theory to the case of late-time era. Here, we have at least two interesting possible scenarios. The first is that the scalar field $\psi$ becomes frozen, after the reheating era, at fixed value $\psi_\infty$ with $w_\text{MG} = -1$ at CP$_6$. The value of $W(\psi_\infty)$ then becomes the graviton mass in the present time, which can be determined using observational constraints. Hence, the accelerating expansion of the universe is due to the constant mass term, as in the dRGT theory. The second possible scenario is that the scalar field $\psi$ could play a role as dynamical quintessensial dark energy, either from CP$_5$ in the massless sector or from CP$_8$ in the massive sector. In the massless sector, the accelerating expansion of the universe is due solely to the standard quintessence paradigm. However, in the massive sector, it is due to the nontrivial interplay between quintessence and massive gravity in the massive sector.

\section{Conclusions}

\label{sec:conclusions}

We have constructed higher-dimensional MTMVMG with nonzero scalar potential, where the graviton mass is varied with respect to the real scalar field $\psi$. Our construction can be summarized as follows. First, we write the MVMG action for higher dimensions using the vielbein formulation in the ADM formalism. We also adopt the vielbein potential in Ref.~\cite{hinterbichler2012interacting} and couple it to the masslike scalar potential $W(\psi)$. Inserting the ansatz metric \eqref{eq:basis} and \eqref{eq:dualbasis} into the MVMG action, we then derive the precursor action \eqref{eq:Spre}. By imposing the set of $D$-constraints \eqref{Dconstraints1} and \eqref{Dconstraints2} to the precursor action \eqref{eq:Spre}, we obtain a theory in which the graviton has $D (D - 3)/2$ degrees of freedom for general $D$ spacetime dimensions, without scalar and vector modes of the St\"uckelberg field. The resulting theory admits the Lorentz violation since we have used the ADM vielbein \eqref{eq:basis} and \eqref{eq:dualbasis}. Still, the $O(D - 1)$ symmetry and the spatial diffeomorphism are preserved in the MTMVMG action \eqref{eq:aksimtmg1}. This theory generalizes the four-dimensional MTMG in Refs.~\cite{defelice2016minimal,defelice2016phenomenology} and the four-dimensional MVMG in Ref.~\cite{huang2012mass}.

Then, we derive the Friedmann-Lema\^itre equations \eqref{eq:Friedman1} and \eqref{eq:Friedman2} for the case of spatially flat spacetimes. To proceed, inspired by Refs.~\cite{copeland1998exponential,leon2013cosmological,wu2013dynamical}, we take both the scalar potential and the graviton mass couplings to have exponential forms \eqref{eq:VWpsi1} and \eqref{eq:VWpsi2}, such that Eqs.~\eqref{eq:Friedman1} and \eqref{eq:Friedman2} can be written into a set of autonomous equations \eqref{eq:xwprime} with constraint \eqref{eq:friedmanconstr}. By performing dynamical system analysis, we find that in this theory there exists five critical points in the massless sector, namely CP$_{1 - 5}$, whereas in the massive sector we have three critical points, namely CP$_{6 - 8}$. We also discuss their stability, their existence, and their cosmological aspects related to the state equation parameter $w_\text{MG}$, the density parameter $\Omega_\text{MG}$, and the decelerated parameter $q$. Among them, we may have some critical points that are suitable to explain either inflation phenomenon or the accelerated universe in the late-time era.

We also have established the local-global existence and the uniqueness of the evolution equations \eqref{eq:xpsiprime}, \eqref{eq:xvprime}, and \eqref{eq:xwprime} with constraint \eqref{eq:friedmanconstr} using Picard's iteration and the contraction mapping properties, assuming that the functions $f_1(u)$ and $f_2(u)$ are bounded. The discussion is then divided into two parts: the $f_1(u) > 0$ case and the $f_1(u) < 0$ case. Note that our results apply to all branches, namely the self-accelerating branch and the normal branch.

Finally, we have particularly discussed some possible cosmological models of the MTMVMG in the self-accelerating branch. Since both the scalar potential and the graviton mass couplings have exponential forms \eqref{eq:VWpsi1} and \eqref{eq:VWpsi2}, the theory has a good description of the inflation era in the early universe using the power-law inflation \cite{lucchin1985power} in which the scale factor $a(t) \varpropto t^{1/\epsilon}$ with the slow-roll parameter $\epsilon = {| \dot{H} |}/{H^2} < 1$. This era can be described either by the critical point CP$_5$ or CP$_8$. In other words, our theory can describe the inflationary era using both the massless and the massive sectors. Also, we have shown that the MTMVMG could accommodate the reheating mechanism in this framework, again for both massless and massive sectors. Perturbative approach needs to be applied, for example, to study the behavior of primordial gravitational waves based on MTMVMG, as in the case of four-dimensional MTMG \cite{fujita2019blue, fujita2020primordial}. The detailed construction and the phenomenological predictions are left for subsequent works. On the other hand, we have at least two interesting possible scenarios for the late times. The first scenario is that the dark energy in the present time is due to the graviton mass which depends on the scalar field $\psi_\infty$ that becomes frozen after the reheating era. The second scenario is that the scalar field $\psi$ plays role as dynamical quintessential dark energy. Therefore, contrary to the massless sector where the accelerating expansion is due to the standard quintessence paradigm, in the massive sector it is due to the nontrivial interplay between quintessence and massive gravity.

\section*{Acknowledgments}

The work in this paper is supported by P2MI FMIPA ITB 2021 and Riset ITB 2021. The work of H.~A. is partially funded by the World Class Research (WCR) grant from RistekBRIN-IPB 2021.

\appendix

\section{Mass Term}

\label{sec:appendixA}

In this Appendix we are going to evaluate potential element in Eq.~\eqref{eq:gfpot} by taking the ADM vielbein in Eqs.~\eqref{eq:basis} and \eqref{eq:dualbasis}. For $n = 0$ we obtain
\begin{eqnarray}
    && \frac{1}{D!} \hat{\epsilon}_{A_1 A_2 \cdots A_D} \tensor{\bm{\tilde{E}}}{^{A_1}} \wedge \tensor{\bm{\tilde{E}}}{^{A_2}} \wedge \tensor{\bm{\tilde{E}}}{^{A_{3}}} \wedge \dots \wedge \tensor{\bm{\tilde{E}}}{^{A_D}} \nonumber \\
    && \quad = \frac{d^D x}{D!} \hat{\epsilon}_{A_1 A_2 \cdots A_D} \hat{\epsilon}^{\mu_1 \mu_2 \cdots \mu_D} \tensor{\tilde{E}}{^{A_1}_{\mu_1}} \tensor{\tilde{E}}{^{A_2}_{\mu_2}} \tensor{\tilde{E}}{^{A_{3}}_{\mu_3}} \cdots \tensor{\tilde{E}}{^{A_D}_{\mu_D}} \nonumber \\
    && \quad = \frac{d^D x}{D!} \hat{\epsilon}_{A_1 A_2 \cdots A_D} \hat{\epsilon}^{B_1 B_2 \cdots B_D} \tensor{\delta}{^{A_1}_{B_1}} \tensor{\delta}{^{A_2}_{B_2}} \cdots \tensor{\delta}{^{A_D}_{B_D}} \det{(\tilde{E})} \nonumber \\
    && \quad = d^D x \, M \det{(\tilde{e})}.
\end{eqnarray}
Note that we have used the relations $dx^{\mu_1} \wedge dx^{\mu_2} \wedge \cdots \wedge dx^{\mu_D} = \hat{\epsilon}^{\mu_1 \mu_2 \cdots \mu_D} d^D x$ in first step and $\hat{\epsilon}^{\mu_1 \mu_2 \cdots \mu_D} = \hat{\epsilon}^{B_1 B_2 \cdots B_D} \tensor{\tilde{E}}{^{\mu_1}_{B_1}} \tensor{\tilde{E}}{^{\mu_2}_{B_2}} \cdots \tensor{\tilde{E}}{^{\mu_D}_{B_D}}$ in the second step. Following the same way as above for other $n$, we have
\begin{widetext}
\begin{eqnarray*}
    \hat{\epsilon}_{A_1 A_2 \cdots A_D} \tensor{\bm{E}}{^{A_1}} \wedge \tensor{\bm{\tilde{E}}}{^{A_2}} \wedge \tensor{\bm{\tilde{E}}}{^{A_3}} \wedge \dots \wedge \tensor{\bm{\tilde{E}}}{^{A_D}} &=& (D - 1)! \Big( M \det{(\tilde{e})} [\tilde{e}^{-1} e] + N \det{(\tilde{e})} \Big), \\
    \hat{\epsilon}_{A_1 A_2 \cdots A_D} \tensor{\bm{E}}{^{A_1}} \wedge \tensor{\bm{E}}{^{A_2}} \wedge \tensor{\bm{\tilde{E}}}{^{A_3}} \wedge \dots \wedge \tensor{\bm{\tilde{E}}}{^{A_D}} &=& (D - 2)! \Big( M \det{(\tilde{e})} ([\tilde{e}^{-1} e]^2 - [(\tilde{e}^{-1} e)^2]) + 2 N \det{(\tilde{e})} [\tilde{e}^{-1} e] \Big), \\
    \hat{\epsilon}_{A_1 A_2 \cdots A_D} \tensor{\bm{E}}{^{A_1}} \wedge \tensor{\bm{E}}{^{A_2}} \wedge \tensor{\bm{E}}{^{A_3}} \wedge \dots \wedge \tensor{\bm{\tilde{E}}}{^{A_D}} &=& (D - 3)! \Big( M \det{(\tilde{e})} \big( [\tilde{e}^{-1} e]^3 - 3 [\tilde{e}^{-1} e] [(\tilde{e}^{-1} e)^2] \\
    && + \, 2 [(\tilde{e}^{-1} e)^3] \big) + 3 N \det{(\tilde{e})} ([\tilde{e}^{-1} e]^2 - [(\tilde{e}^{-1} e)^2]) \Big), \\
    &\vdots& \\
    \hat{\epsilon}_{A_1 A_2 \cdots A_D} \tensor{\bm{\tilde{E}}}{^{A_1}} \wedge \tensor{\bm{\tilde{E}}}{^{A_2}} \wedge \tensor{\bm{E}}{^{A_{3}}} \wedge \dots \wedge \tensor{\bm{E}}{^{A_D}} &=& (D - 2)! \Big( N \det{(e)} \left( [e^{-1} \tilde{e}]^2 - [(e^{-1} \tilde{e})^2] \right) + 2 M \det{(e)} [e^{-1} \tilde{e}] \Big), \\
    \hat{\epsilon}_{A_1 A_2 \cdots A_D} \tensor{\bm{\tilde{E}}}{^{A_1}} \wedge \tensor{\bm{E}}{^{A_2}} \wedge \tensor{\bm{E}}{^{A_{3}}} \wedge \dots \wedge \tensor{\bm{E}}{^{A_D}} &=& (D - 1)! \left( N \det{(e)} [e^{-1} \tilde{e}] + M \det{(e)} \right), \\
    \hat{\epsilon}_{A_1 A_2 \cdots A_D} \tensor{\bm{E}}{^{A_1}} \wedge \tensor{\bm{E}}{^{A_2}} \wedge \tensor{\bm{E}}{^{A_3}} \wedge \dots \wedge \tensor{\bm{E}}{^{A_D}} &=& D! \, N \det{(e)},
\end{eqnarray*}
where $[\cdots]$ denotes the trace. Now we define $\tensor{X}{^I_J} \equiv \tensor{\tilde{e}}{^I_k} \tensor{e}{^k_J}$ and $\tensor{Y}{^I_J} \equiv \tensor{e}{^I_k} \tensor{\tilde{e}}{^k_J}$ such that
\begin{eqnarray}
    && \sum_{n = 0}^D \frac{c_n}{n! (D - n)!} \hat{\epsilon}_{A_1 A_2 \cdots A_D} \tensor{\bm{E}}{^{A_1}} \wedge \dots \wedge \tensor{\bm{E}}{^{A_n}} \wedge \tensor{\bm{\tilde{E}}}{^{A_{n + 1}}} \wedge \dots \wedge \tensor{\bm{\tilde{E}}}{^{A_D}} \nonumber \\
    && \qquad \qquad = d^D x \, N \det{(e)} \bigg( | \det{(X)} | \frac{M}{N} \sum_{n = 0}^{D - 1} c_n \mathcal{S}_n(Y) + \sum_{n = 0}^{D - 1} c_{D - n} \mathcal{S}_n(X) \bigg).
\end{eqnarray}
\end{widetext}
with $\mathcal{S}_n$ symmetric polynomial. For example given matrix $\mathrm{A}$ size $D \times D$,
\begin{eqnarray*}
    \mathcal{S}_0(\mathrm{A}) &=& 1, \\
    \mathcal{S}_1(\mathrm{A}) &=& [\mathrm{A}], \\
    \mathcal{S}_2(\mathrm{A}) &=& \frac{1}{2!} \Big( [\mathrm{A}]^2 - [\mathrm{A}^2] \Big), \\
    \mathcal{S}_3(\mathrm{A}) &=& \frac{1}{3!} \Big( [\mathrm{A}]^3 - 3 [\mathrm{A}] [\mathrm{A}^2] + 2 [\mathrm{A}^3] \Big), \\
    \mathcal{S}_4(\mathrm{A}) &=& \frac{1}{4!} \Big( [\mathrm{A}]^4 - 6 [\mathrm{A}]^2 [\mathrm{A}^2] + 8 [\mathrm{A}] [\mathrm{A}^3] \\
    && + \, 3 [\mathrm{A}^2]^2 - 6 [\mathrm{A}^4] \Big) \\
    &\vdots& \\
    \mathcal{S}_D(\mathrm{A}) &=& \det{(\mathrm{A})}, \\
    \mathcal{S}_{n > D}(\mathrm{A}) &=& 0.
\end{eqnarray*}

\section{MTMVMG Action}

\label{sec:appendixB}

In this Appendix we fill some gaps in the derivation of the MTMVMG action in Sec.~\ref{sec:MVMGS}. Let us begin with the MTMVMG Hamiltonian \eqref{eq:hmtmg} which can be expressed in terms of the time derivative of spatial vielbein and the scalar field,
\begin{eqnarray}
    \tensor{\dot{e}}{^I_i} &\approx& \frac{\delta \mathcal{H}_\text{MTMVMG}}{\delta \tensor{\pi}{_I^i}} \nonumber \\
    &=& \frac{1}{M_{\text{Pl}}^{D - 2}} \Bigg[ \frac{N}{\det{(e)}} \left( \tensor{\pi}{^I_i} - \frac{1}{D - 2} \tensor{\pi}{^k_K} \tensor{e}{^K_k} \tensor{e}{^I_i} \right) \nonumber \\
    && + \, (\nabla_i N_j) \delta^{I J} \tensor{e}{^j_J} \nonumber \\
    && + \, | \det{(X)} | \lambda M W(\psi) \left( \gamma_{i k} \tensor{\tilde{e}}{^k_L} \delta^{I K} - \frac{1}{D - 2} \tensor{Y}{^K_L} \tensor{e}{^I_i} \right) \nonumber \\
    && \times \, \sum_{n = 1}^{D - 1} \sum_{m = 1}^n (-1)^m c_n \tensor{\left( Y^{m - 1} \right)}{^L_K} \mathcal{S}_{n - m}(Y) \Bigg], \\
    \dot{\psi} &\approx& \frac{\delta \mathcal{H}_\text{MTMVMG}}{\delta \pi} \nonumber \\
    &=& \frac{N \pi}{\det{(e)}} + N^i \partial_i \psi \nonumber \\
    && + \, | \det{(X)} | \lambda M \frac{dW}{d\psi} \sum_{n = 1}^{D - 1} c_n \mathcal{S}_n(Y),
\end{eqnarray}
respectively, such that we have the new conjugate momenta
\begin{eqnarray}
    \frac{\tensor{\pi}{^i_I}}{\det{(e)}} &\equiv& M_{\text{Pl}}^{D - 2}(K^{i j} \delta_{I J} \tensor{e}{^J_j} - K \tensor{e}{^i_I}) \nonumber \\
    && - \, \lambda \frac{W(\psi)}{2} \frac{M}{N} \Theta^{i j} \delta_{I J} \tensor{e}{^J_j}, \label{eq:conjugatemomentum1} \\
    \frac{\pi}{\det{(e)}} &\equiv& \frac{\dot{\psi}}{N} - \frac{N^i}{N} \partial_i \psi - \lambda \frac{dW}{d\psi} \frac{M}{N} \Phi. \label{eq:conjugatemomentum2}
\end{eqnarray}
The last terms in Eqs.~\eqref{eq:conjugatemomentum1} and \eqref{eq:conjugatemomentum2} arise since the MTMVMG constraints depend on the conjugate momenta $\tensor{\pi}{_I^i}$ and $\pi$. The symmetrical property of $\Theta^{ij}$ cancels the antisymmetric terms, namely $\alpha_{M N} \mathcal{P}^{[M N]}$ and $\beta_{M N} Y^{[M N]}$ out of the MTMVMG Hamiltonian. Employing the Legendre transformation, we obtain the MTMVMG action,
\begin{widetext}
\begin{eqnarray}
    S_{\text{MTMVMG}} &\equiv& \int_{\mathcal{M}} d^D x \, \Big( \tensor{\pi}{^i_I} \tensor{\dot{e}}{^I_i} + \pi \dot{\psi} - \mathcal{H}_\text{MTMVMG} \big|_{\alpha_{M N} = \beta_{M N} = 0} \Big) \\
    &=& S_{\text{pre}} - \int_{\mathcal{M}} d^D x (\lambda \mathcal{C}_0 + \lambda^i \mathcal{C}_i) - \frac{2}{M_{\text{Pl}}^{D - 2}} \int_{\mathcal{M}} d^D x \, N \det{(e)} \left( \lambda \frac{W(\psi)}{4} \frac{M}{N} \right)^2 \left( \Theta^{i j} \Theta_{i j} - \frac{1}{D - 2} \Theta^2 \right) \nonumber \\
    && - \, \frac{1}{2} \int_{\mathcal{M}} d^D x \, N \det{(e)} \left( \lambda \frac{dW}{d\psi} \frac{M}{N} \right)^2 \Phi^2 + S_\text{matter} \\
    &=& S_{\text{pre}} - \int_{\mathcal{M}} d^D x \, (\lambda \mathcal{\bar{C}}_0 + \lambda^i \mathcal{C}_i) + \frac{2}{M_{\text{Pl}}^{D - 2}} \int_{\mathcal{M}} d^D x \, N \det{(e)} \left( \lambda \frac{W(\psi)}{4} \frac{M}{N} \right)^2 \left( \Theta^{i j} \Theta_{i j} - \frac{1}{D - 2} \Theta^2 \right) \nonumber \\
    && + \, \frac{1}{2} \int_{\mathcal{M}} d^D x \, N \det{(e)} \left( \lambda \frac{dW}{d\psi} \frac{M}{N} \right)^2 \Phi^2 + S_\text{matter},
\end{eqnarray}
where $S_{\text{pre}}$ and $S_\text{matter}$ are the precursor action and the matter action, respectively, whereas
\begin{eqnarray}
    \mathcal{C}_0 &=& W(\psi) M \det{(e)} | \det{(X)} | \sum_{n = 1}^{D - 1} \sum_{m = 1}^n (-1)^m c_n \tensor{\left( Y^{m - 1} \right)}{^J_I} \mathcal{S}_{n - m}(Y) \nonumber \\
    && \times \left[ \left( \gamma_{i k} \tensor{\tilde{e}}{^k_J} \tensor{e}{^I_j} - \frac{1}{D - 2} \gamma_{i j} \tensor{Y}{^I_J} \right) \left( K^{i j} - K \gamma^{i j} - \frac{\lambda}{M_{\text{Pl}}^{D - 2}} \frac{W(\psi)}{2} \frac{M}{N} \Theta^{i j} \right) - \frac{1}{M} \tensor{\tilde{e}}{^k_J} \frac{\partial}{\partial t} \tensor{\tilde{e}}{^I_k} \right] \nonumber \\
    && + \, M \det{(e)} \frac{dW}{d\psi} \left( \frac{\dot{\psi}}{N} - \frac{N^i}{N} \partial_i \psi \right) \Phi - \lambda N \det{(e)} \left( \frac{dW}{d\psi} \frac{M}{N} \right)^2 \Phi^2, \\
    \mathcal{C}_i &=& M \det{(e)} \bigg[ W(\psi) \nabla^j \bigg( | \det{(X)} | \sum_{n = 1}^{D - 1} \sum_{m = 1}^n (-1)^m c_n \tensor{\left( Y^{m - 1} \right)}{^J_I} \mathcal{S}_{n - m}(Y) M \tensor{Y}{^K_J} \delta_{K L} \tensor{e}{^I_i} \tensor{e}{^L_j} \bigg) \nonumber \\
    && + \, | \det{(X)} | \partial_i \psi \frac{dW}{d\psi} \sum_{n = 1}^{D - 1} c_n \mathcal{S}_n(Y) \bigg],
\end{eqnarray}
with
\begin{eqnarray}
    \mathcal{\bar{C}}_0 \equiv \mathcal{C}_0 \big|_{\lambda = 0} &=& M | \det{(X)} | \Bigg\{ W(\psi) \left[ \left( \gamma_{i k} \tensor{\tilde{e}}{^k_J} \tensor{e}{^I_j} - \frac{1}{D - 2} \gamma_{i j} \tensor{Y}{^I_J} \right) (K^{i j} - K \gamma^{i j}) - \frac{1}{M} \tensor{\tilde{e}}{^k_J} \frac{\partial}{\partial t} \tensor{\tilde{e}}{^I_k} \right] \nonumber \\
    && \times \sum_{n = 1}^{D - 1} \sum_{m = 1}^n (-1)^m c_n \tensor{\left( Y^{m - 1} \right)}{^J_I} \mathcal{S}_{n - m}(Y) \nonumber \\
    && - \, \frac{dW}{d\psi} \left( \frac{\dot{\psi}}{N} - \frac{N^i}{N} \partial_i \psi \right) \sum_{n = 1}^{D - 1} c_n \mathcal{S}_n(Y) \Bigg\}.
\end{eqnarray}

\section{Linear Stability}

\label{sec:appendixC}

\begin{table}[htbp]

\small

\begin{tabularx}{\columnwidth}{| >{\centering\arraybackslash}>{\hsize=1.0\hsize}X | >{\centering\arraybackslash}>{\hsize=1.0\hsize}X |}

\hline

Eigenvalues & Stability \\

\hline
\hline

\multicolumn{2}{| c |}{Real eigenvalues} \\

\hline
\hline

$\mu_n < 0$, for $n = 1, 2, 3$ & stable node \\

\hline

$\mu_n > 0$, for $n = 1, 2, 3$ & unstable node \\

\hline

$\mu_n > 0$ and $\mu_m \leq 0$, for $n \cup m = 1, 2, 3$ & unstable \\

\hline

$\mu_n > 0$ and $\mu_m < 0$, for $n \cup m = 1, 2, 3$ & saddle point \\

\hline

$\mu_n = 0$ and $\mu_m < 0$, for $n \cup m = 1, 2, 3$ & nonhyperbolic, linear stability fails to determine and other methods are needed \\

\hline
\hline

\multicolumn{2}{| c |}{Complex eigenvalues} \\

\hline
\hline

$\text{Re}(\mu_n) < 0$, for $n = 1, 2, 3$ & strongly stable spiral \\

\hline

$\text{Re}(\mu_n) > 0$, for $n = 1, 2, 3$ & strongly unstable spiral \\

\hline

$\text{Re}(\mu_n) < 0$ and $\text{Re}(\mu_m) > 0$, for $n \cup m = 1, 2, 3$ & saddle focus \\

\hline

$\text{Re}(\mu_n) = 0$ and $\text{Re}(\mu_m) \ne 0$, for $n \cup m = 1, 2, 3$ & weakly center point \\

\hline

$\text{Re}(\mu_n) = 0$, for $n = 1, 2, 3$ & strongly center point \\

\hline
\hline

\multicolumn{2}{| c |}{Real eigenvalues $\mu_n$ and complex eigenvalues $\mu_m$ for $n \cup m = 1, 2, 3$} \\

\hline
\hline

$\mu_n < 0$ and $\text{Re}(\mu_m) < 0$ & weakly stable spiral \\

\hline

$\mu_n > 0$ and $\text{Re}(\mu_m) > 0$ & weakly unstable spiral \\

\hline

$\mu_n < 0$ and $\text{Re}(\mu_m) > 0$ & saddle focus \\

\hline

$\mu_n > 0$ and $\text{Re}(\mu_m) < 0$ & saddle focus \\

\hline

\end{tabularx}

\caption{\label{tab:criticalpoints} Stability properties of the critical point $(x_{\psi, c}, x_{V, c}, x_{W, c})$ based on the three eigenvalues $\mu_1, \mu_2, $ and $\mu_3$.}

\end{table}

In this Appendix we consider the linear perturbation of dynamical equations \eqref{eq:xpsiprime}-\eqref{eq:xwprime} in the critical points $(x_{\psi, c}, x_{V, c}, x_{W, c})$. First, we expand autonomous variables around these points
\begin{eqnarray}
    x_\psi &=& x_{\psi, c} + u_\psi, \\
    x_V &=& x_{V, c} + u_V, \\
    x_W &=& x_{W, c} + u_W.
\end{eqnarray}
The first order equation of motions has the form
\begin{eqnarray}
    \frac{2 u_\psi'}{D - 1} &=& \Big[ 3 (1 - w_m) x_{\psi, c}^2 - (1 + w_m) x_{V, c}^2 - (f_2(u) + w_m f_1(u)) x_{W, c}^2 - (1 - w_m) \Big] u_\psi \nonumber \\
    && - \, \Bigg[ 2 (1 + w_m) x_{\psi, c} - 4 \sqrt{\frac{D - 2}{D - 1}} \lambda_V \Bigg] x_{V, c} u_V - \Bigg[ 2 (f_2(u) + w_m f_1(u)) x_{\psi, c} - 4 \sqrt{\frac{D - 2}{D - 1}} \lambda_W f_1(u) \Bigg] x_{W, c} u_W, \nonumber \\ \\
    \frac{2 u_V'}{D - 1} &=& \Bigg[ 2 (1 - w_m) x_{\psi, c} - \sqrt{\frac{D - 2}{D - 1}} \lambda_V \Bigg] x_{V, c} u_\psi + \Bigg[ (1 + w_m) + (1 - w_m) x_{\psi, c}^2 \nonumber \\
    && - \, 3 (1 + w_m) x_{V, c}^2 - (f_2(u) + w_m f_1(u)) x_{W, c}^2 - \sqrt{\frac{D - 2}{D - 1}} \lambda_V x_{\psi, c} \Bigg] u_V - 2 (f_2(u) + w_m f_1(u)) x_{V, c} x_{W, c} u_W, \nonumber \\ \\
    \frac{2 u_W'}{D - 1} &=& \Bigg[ 2 (1 - w_m) x_{\psi, c} - \sqrt{\frac{D - 2}{D - 1}} \lambda_W \Bigg] x_{W, c} u_\psi - 2 (1 + w_m) x_{V, c} x_{W, c} u_V \nonumber \\
    && + \, \Bigg[ (1 + w_m) + (1 - w_m) x_{\psi, c}^2 - (1 + w_m) x_{V, c}^2 - 3 (f_2(u) + w_m f_1(u)) x_{W, c}^2 - \sqrt{\frac{D - 2}{D - 1}} \lambda_W x_{\psi, c} \Bigg] u_W,
\end{eqnarray}
such that we can cast the above equations into the matrix form
\begin{equation}
    \begin{pmatrix} u_\psi' \\ u_V' \\ u_W' \end{pmatrix} = \bm{J} \begin{pmatrix} u_\psi \\ u_V \\ u_W \end{pmatrix},
\end{equation}
where $\bm{J}$ is Jacobian matrix. Let the eigenvalues of $\bm{J}$ are $\mu_1$, $\mu_2$, and $\mu_3$. We write the Jacobian matrix $\bm{J}$ and its eigenvalues for each of the critical points below. These then can be used to analyze their stability properties, which we summarize in Table \ref{tab:criticalpoints}.
\begin{itemize}

\item The Jacobian matrix for CP$_1$: $(0, 0, 0)$ is
\begin{equation}
    \bm{J} = \begin{pmatrix}
    -(1 - w_m) & 0 & 0 \\
    0 & 1 + w_m & 0 \\
    0 & 0 & 1 + w_m
    \end{pmatrix},
\end{equation}
with eigenvalues
\begin{equation}
    \{ -(1 - w_m), 1 + w_m \}.
\end{equation}

\item The Jacobian matrix for CP$_2$: $(1, 0, 0)$ is
\begin{equation}
    \bm{J} = \begin{pmatrix}
    2 (1 - w_m) & 0 & 0 \\
    0 & 2 - \lambda_V \sqrt{\frac{D - 2}{D - 1}} & 0 \\
    0 & 0 & 2 - \lambda_W \sqrt{\frac{D - 2}{D - 1}}
    \end{pmatrix},
\end{equation}
with eigenvalues
\begin{equation}
    \Bigg\{ 2 (1 - w_m), 2 - \lambda_V \sqrt{\frac{D - 2}{D - 1}}, 2 - \lambda_W \sqrt{\frac{D - 2}{D - 1}} \Bigg\}.
\end{equation}

\item The Jacobian matrix for CP$_3$: $(-1, 0, 0)$ is
\begin{equation}
    \bm{J} = \begin{pmatrix}
    2 (1 - w_m) & 0 & 0 \\
    0 & 2 + \lambda_V \sqrt{\frac{D - 2}{D - 1}} & 0 \\
    0 & 0 & 2 + \lambda_W \sqrt{\frac{D - 2}{D - 1}}
    \end{pmatrix},
\end{equation}
with eigenvalues
\begin{equation}
    \Bigg\{ 2 (1 - w_m), 2 + \sqrt{\frac{D - 2}{D - 1}}, 2 + \lambda_W \sqrt{\frac{D - 2}{D - 1}} \Bigg\}.
\end{equation}

\item The Jacobian matrix for CP$_4$: $(x_{\psi, c}, 0, 0)$ is
\begin{equation}
    \bm{J} = \begin{pmatrix}
    0 & 0 & 0 \\
    0 & 2 - \lambda_V x_{\psi, c} \sqrt{\frac{D - 2}{D - 1}} & 0 \\
    0 & 0 & 2 - \lambda_W x_{\psi, c} \sqrt{\frac{D - 2}{D - 1}}
    \end{pmatrix},
\end{equation}
with eigenvalues
\begin{equation}
    \Bigg\{ 0, 2 - \lambda_V x_{\psi, c} \sqrt{\frac{D - 2}{D - 1}}, 2 - \lambda_W x_{\psi, c} \sqrt{\frac{D - 2}{D - 1}} \Bigg\}.
\end{equation}

\item One of the eigenvalues of the Jacobian matrix for CP$_5$: $\bigg( \frac{\mathcal{A}_\pm}{\lambda_V} \sqrt{\frac{D - 1}{D - 2}}, \frac{1}{\lambda_V} \sqrt{\frac{(D - 1)(2 - \mathcal{A}_\pm) \mathcal{A}_\pm}{2 (D - 2)}}, 0 \bigg)$ is
\begin{equation}
    \mu_1 = \left( 1 - \frac{\lambda_W}{\lambda_V} \right) \mathcal{A}_\pm.
\end{equation}

\item The Jacobian matrix for CP$_6$: $\bigg( 0, \sqrt{\frac{\lambda_W}{\lambda_W - \lambda_V}}, \sqrt{\frac{\lambda_V}{| f_1(u) | (\lambda_W - \lambda_V)}} \bigg)$ is
\begin{equation}
    \bm{J} = \begin{pmatrix}
    -2 & -4 \lambda_V \sqrt{\frac{D - 2}{D - 1} \frac{\lambda_W}{\lambda_W - \lambda_V}} & 4 \lambda_W \sqrt{\frac{D - 2}{D - 1} \frac{| f_1(u) | \lambda_V}{\lambda_W - \lambda_V}} \\
    -\lambda_V \sqrt{\frac{D - 2}{D - 1} \frac{\lambda_W}{\lambda_W - \lambda_V}} & -\frac{2 (1 + w_m) \lambda_W}{\lambda_W - \lambda_V} & -\frac{2 (1 + w_m) \sqrt{| f_1(u) | \lambda_V \lambda_W}}{\lambda_W - \lambda_V} \\
    -\lambda_W \sqrt{\frac{D - 2}{D - 1} \frac{\lambda_V}{| f_1(u) | (\lambda_W - \lambda_V)}} & -\frac{2 (1 + w_m)}{\lambda_W - \lambda_V} \sqrt{\frac{\lambda_V \lambda_W}{| f_1(u) |}} & \frac{2 (1 + w_m) \lambda_V}{\lambda_W - \lambda_V}
    \end{pmatrix},
\end{equation}
with eigenvalues
\begin{equation}
    \Bigg\{ -2 (1 + w_m), -1 \pm \sqrt{1 - \frac{4 (D - 1) \lambda_V \lambda_W}{D - 2}} \Bigg\}.
\end{equation}

\item The Jacobian matrix for CP$_7$: $\bigg( 0, \sqrt{1 - f_1(u) x_{W, c}^2}, x_{W, c} \bigg)$ is
\begin{equation}
    \bm{J} = \begin{pmatrix}
    -2 & 0 & 0 \\
    0 & -2 (1 + w_m) (1 - f_1(u) x_{W, c}^2) & -2 (1 + w_m) f_1(u) x_{W, c} \sqrt{1 - f_1(u) x_{W, c}^2} \\
    0 & -2 (1 + w_m) x_{W, c} \sqrt{1 - f_1(u) x_{W, c}^2} & -2 (1 + w_m) f_1(u) x_{W, x}^2
    \end{pmatrix},
\end{equation}
with eigenvalues
\begin{equation}
    \{ -2, -2 (1 + w_m), 0 \}.
\end{equation}

\item One of the eigenvalues of the Jacobian matrix for CP$_8$: $\bigg( \frac{\mathcal{B}_\pm}{\lambda_W} \sqrt{\frac{D - 1}{D - 2}}, 0, \frac{1}{\lambda_W} \sqrt{\frac{(D - 1)(2 - \mathcal{B}_\pm) \mathcal{B}_\pm}{2 (D - 2) f_1(u)}} \bigg)$ is
\begin{equation}
    \mu_1 = \left( 1 - \frac{\lambda_V}{\lambda_W} \right) \mathcal{B}_\pm.
\end{equation}

\end{itemize}
\end{widetext}

\bibliographystyle{apsrev4-1}
\bibliography{references}

\end{document}